\newtheorem{theorem}{Theorem}
\newtheorem{corollary}[theorem]{Corollary}
\newtheorem{conjecture}[theorem]{Conjecture}
\newtheorem{lemma}[theorem]{Lemma}
\newtheorem{claim}[theorem]{Claim}
\newtheorem{proposition}[theorem]{Proposition}
\newenvironment{proof}{\noindent\bf{Proof.}\rm}{\hfill$\blacksquare$\bigskip}
\begin{document}

\title{Tighter bounds for online bipartite matching}

\author{Uriel Feige~\thanks{Department of Computer Science and Applied Mathematics, The Weizmann Institute. {\tt uriel.feige@weizmann.ac.il}}}

\maketitle

\begin{abstract}
We study the online bipartite matching problem, introduced by Karp, Vazirani and Vazirani [1990]. For bipartite graphs with matchings of size $n$, it is known that the {\em Ranking} randomized algorithm matches at least $(1 - \frac{1}{e})n$ edges in expectation. It is also known that no online algorithm matches more than $(1 - \frac{1}{e})n + O(1)$ edges in expectation, when the input is chosen from a certain distribution that we refer to as $D_n$. This upper bound also applies to {\em fractional} matchings. We review the known proofs for this last statement. In passing we observe that the $O(1)$ additive term (in the upper bound for fractional matching) is $\frac{1}{2} - \frac{1}{2e} + O(\frac{1}{n})$, and that this term is tight: the online algorithm known as {\em Balance} indeed produces a fractional matching of this size. We provide a new proof that exactly characterizes the expected cardinality of the (integral) matching produced by {\em Ranking} when the input graph comes from the support of $D_n$. This expectation turns out to be $(1 - \frac{1}{e})n + 1 - \frac{2}{e} + O(\frac{1}{n!})$, and serves as an upper bound on the performance ratio of any online (integral) matching algorithm.
\end{abstract}

\section{Introduction}

Given a bipartite graph $G(U,V;E)$, where $U$ and $V$ are the sets of vertices and $E \in U \times V$ is the set of edges, a matching $M \subset E$ is a set of edges such that every vertex is incident with at most one edge of $M$. Given a matching $M$, a vertex is referred to as either matched or exposed, depending on whether it is incident with an edge of $M$. A maximum matching in a graph is a matching of maximum cardinality, and a maximal matching is a matching that is not a proper subset of any other matching. Maximal matchings can easily be found by greedy algorithms, and maximum matchings can also be found by various polynomial time algorithms, using techniques such as alternating paths or linear programming (see~\cite{LP86} and references therein). In every graph, the cardinality of every maximal matching is at least half of that of the maximum matching, because every matched edge can exclude at most two edges from the maximum matching.

For simplicity of notation, for every $n$ we shall only consider the following class of bipartite graphs, that we shall refer to as $G_n$. For every $G(U,V;E) \in G_n$ it holds that $|U| = |V| = n$ and that $E$ contains a matching of size $n$ (and hence $G$ has a perfect matching).  The vertices of $U$ will be denoted by $u_i$ (for $1 \le i \le n$) and the vertices of $V$ will be denoted by $v_i$ (for $1 \le i \le n$).  All results that we will state for $G_n$ hold without change for all bipartite graphs, provided that $n$ denotes the size of the maximum matching in the graph.

Karp, Vazirani and Vazirani~\cite{KVV90} introduced an online version of the maximum bipartite matching problem. This setting can be viewed as a game between two players: a maximizing player who wishes the resulting matching to be as large as possible, and a minimizing player who wishes the matching to be as small as possible. First, the minimizing player chooses $G(U,V;E)$ in private (without the maximizing player seeing $E$), subject to $G \in G_n$. Thereafter, the structure of $G$ is revealed to the maximizing player in $n$ steps, where at step $j$ (for $1 \le j \le n$) the set $N(u_j) \subset V$ of vertices adjacent to $u_j$ is revealed. At every step $j$, upon seeing $N(u_j)$ (and based on all edges previously seen and all previous matching decisions made), the maximizing player needs to irrevocably either match $u_j$ to a currently exposed vertex in $N(u_j)$, or leave $u_j$ exposed.

There is much recent interest in the online bipartite matching problem and variations and generalizations of it, as such models have applications for allocation problems in certain economic settings, in which buyers (vertices of $U$) arrive online and are interested in purchasing various items (vertices of $V$). For more details, see for example the survey by Metha~\cite{Mehta13}.

An algorithm for the maximizing player in the online bipartite matching setting will be called {\em greedy} if the only vertices of $U$ that it leaves unmatched are those vertices $u\in U$ that upon their arrival did not have an exposed neighbor (and hence could not be matched). It is not difficult to see that every non-greedy algorithm $A$ can be replaced by a greedy algorithm $A'$ that for every graph $G$ matches at least as many vertices as $A$ does. Hence we shall assume that the algorithm for the maximizing player is greedy, and this assumption is made without loss of generality, as far as the results in this manuscript are concerned.

Every greedy algorithm (for the maximizing player) produces a maximal matching, and hence matches at least half the vertices. 
For every deterministic algorithm, the minimizing player can select a bipartite graph $G$ (that admits a perfect matching) 
that guarantees that the algorithm matches only half the vertices. (Sketch: The first $\frac{|U|}{2}$ arriving vertices have all of $V$ as their neighbors, and the remaining $\frac{|U|}{2}$ are neighbors only of the $\frac{|V|}{2}$ vertices that the algorithm matched with the first $\frac{|U|}{2}$ vertices.)

To improve the size of the matching beyond $\frac{n}{2}$, Karp, Vazirani and Vazirani~\cite{KVV90} considered randomized algorithms for the maximizing player. Specifically, they proposed an algorithm called {\em Ranking} that works as follows. It first selects uniformly at random a permutation $\pi$ over the vertices $V$. Thereafter, upon arrival of a vertex $u$, it is matched to its earliest (according to $\pi$) exposed neighbor if there is one (and left unmatched otherwise).
As the maximizing algorithm is randomized (due to the random choice of $\pi$), the number of vertices matched is a random variable, and we consider its expectation.

Let $A$ be a randomized algorithm (such as {\em Ranking}) for the maximizing player. As such, for every bipartite graph $G$ it produces a distribution over matchings. For a bipartite graph $G\in G_n$, we use the following notation:

\begin{itemize}

\item $\rho_n(A,G)$ is the expected cardinality of matching produced by $A$ when the input graph is $G$.

\item $\rho_n(A,-)$ is the minimum over all $G \in G_n$ of $\rho_n(A,G)$. Namely, $\rho_n(A,-) = \min_{G\in G_n}[\rho_n(A,G)]$.

\item $\rho_n$ is the maximum over all $A$ (randomized online matching algorithms for the maximizing player) of $\rho_n(A,-)$. Namely, $\rho_n = \max_A[\rho_n(A,-)]$. (Showing that the maximum is attained is a technicality that we ignore here.)

\item $\rho = \inf_n \frac{\rho_n}{n}$. Namely, $\rho$ is the largest constant (independent of $n$) such that $\rho \cdot n \le \rho_n$ for all $n$. (One might find a definition such as $\rho = \lim_{n \rightarrow \infty} \frac{\rho_n}{n}$ more natural, but it turns out that both definitions of $\rho$ give the same value, which will be seen to be $1 - \frac{1}{e}$.)

\end{itemize}

Karp, Vazirani and Vazirani~\cite{KVV90} showed that $\rho_n(Ranking, -) \ge (1 - \frac{1}{e})n - o(n)$, where $e$ is the base of the natural logarithm (and $(1 - \frac{1}{e}) \simeq 0.632$). Unfortunately, that paper had only a conference version and not a journal version, and the proof presented in the conference version appears to have gaps. Later work (e.g., \cite{MSVV07,GM08,DJK13}), motivated by extensions of the online matching problem to other problems such as the {\em adwords} problem, presented alternative proofs, and also established that the $o(n)$ term is not required.
There have also been expositions of simpler versions of these proofs. See~\cite{BM08,Mathieu11,EFFS18}, for example. 
Summarizing this earlier work, we have:

\begin{theorem}
\label{thm:KVV90a}
For every bipartite graph $G\in G_n$, the expected cardinality of the matching produced by {\em Ranking} is at least $(1 - \frac{1}{e})n$. Hence $\rho_n(Ranking,-) \ge (1 - \frac{1}{e})n$, and $\rho \ge 1 - \frac{1}{e} \simeq 0.632$.
\end{theorem}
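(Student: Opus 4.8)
The plan is to prove the theorem via the \emph{randomized primal-dual} (equivalently, dual-fitting) method, which I find the cleanest of the known approaches. Consider the standard LP relaxation of maximum matching, with a variable $x_{uv}$ for each $(u,v)\in E$, and its dual, which has a variable $\alpha_u$ for each $u\in U$, a variable $\beta_v$ for each $v\in V$, and one constraint $\alpha_u+\beta_v\ge 1$ per edge. Since $G\in G_n$ admits a perfect matching $M^{*}$, both LP optima equal $n$. It is convenient to replace the random permutation $\pi$ by independent uniform values $y_v\in[0,1]$ attached to the vertices $v\in V$ (only the relative order matters, so the induced distribution over matchings is unchanged); \emph{Ranking} now matches each arriving $u$ to its available neighbor of smallest $y$-value.

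Fix the increasing function $g(y)=e^{y-1}$ on $[0,1]$, so that $g(1)=1$. Run \emph{Ranking}, and whenever it matches an edge $(u,v)$ set $\alpha_u:=1-g(y_v)$ and $\beta_v:=g(y_v)$; set $\alpha_u:=0$ for every unmatched $u$ and $\beta_v:=0$ for every unmatched $v$. Then $\alpha,\beta\ge 0$, and for the produced matching $M$ we have the pointwise identity $\sum_{u}\alpha_u+\sum_{v}\beta_v=|M|$, since each matched edge contributes exactly $1$ and unmatched vertices contribute $0$. Taking expectations and using that $M^{*}$ meets every vertex exactly once,
\[
\mathbb{E}[|M|]=\sum_{u}\mathbb{E}[\alpha_u]+\sum_{v}\mathbb{E}[\beta_v]=\sum_{(u,v)\in M^{*}}\bigl(\mathbb{E}[\alpha_u]+\mathbb{E}[\beta_v]\bigr).
\]
So it suffices to prove the per-edge bound $\mathbb{E}[\alpha_u]+\mathbb{E}[\beta_v]\ge 1-\tfrac1e$ for every $(u,v)\in E$; summing it over the $n$ edges of $M^{*}$ then finishes the proof.

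To prove the per-edge bound, fix $(u,v)\in E$ and condition on all the values $\{y_w:w\ne v\}$, which determines the execution of \emph{Ranking} on $G-v$. Let $\theta_c\in[0,1]$ be the $y$-value of the vertex matched to $u$ in that execution, and set $\theta_c:=1$ if $u$ is left unmatched there. The crux is the structural fact, proved by tracking the single alternating path along which the matching changes when $v$ is re-inserted into $G-v$: (i) in $G$, the vertex $u$ is matched to some vertex of $y$-value at most $\theta_c$ whenever $\theta_c<1$, so that $\alpha_u\ge 1-g(\theta_c)$ holds pointwise (trivially so when $\theta_c=1$); and (ii) if $y_v<\theta_c$ then $v$ is matched in $G$, so that $\beta_v\ge g(y_v)$ on the event $\{y_v<\theta_c\}$. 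Integrating (ii) over $y_v$ and adding (i),
\[
\mathbb{E}\bigl[\alpha_u+\beta_v \,\big|\, \{y_w\}_{w\ne v}\bigr]\ \ge\ \bigl(1-g(\theta_c)\bigr)+\int_{0}^{\theta_c}g(y)\,dy\ =\ \bigl(1-g(\theta_c)\bigr)+\bigl(g(\theta_c)-\tfrac1e\bigr)\ =\ 1-\tfrac1e,
\]
and taking expectation over $\{y_w\}_{w\ne v}$ yields the per-edge bound. The choice $g(y)=e^{y-1}$ is precisely the one making $1-g(\theta)+\int_0^{\theta}g$ independent of $\theta$, which is why the bound comes out equal to $1-\tfrac1e$ for every value of $\theta_c$.

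I expect the main obstacle to be items (i)--(ii): establishing that deleting one vertex of $V$ changes the \emph{Ranking} matching only along a single alternating path emanating from that vertex, and that the $y$-values along this path increase, so that re-inserting $v$ can only \emph{improve} $u$'s partner and, when $y_v$ is small enough, must leave $v$ matched. This ``monotonicity of \emph{Ranking}'' lemma is the technical heart; once it is in place, the remainder is the routine accounting above together with the one-line integral. An alternative I would keep in reserve is the more combinatorial argument that directly analyzes $x_t$, the probability that the vertex of rank $t$ under $\pi$ ends up matched, derives a recursion of the form $x_t\ge 1-\tfrac1n\sum_{s\le t}x_s$, and solves it to obtain the $1-\tfrac1e$ bound as $n\to\infty$; this avoids LP duality but still needs a comparable structural lemma and a more delicate summation, so I would present the primal-dual proof as the main one.
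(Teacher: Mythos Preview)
Your proposal is correct and is essentially the same proof the paper presents: the paper's ``utility'' $y(u)=1-p$ and ``revenue'' $r(v)=p$ with $p=e^{w-1}$ are exactly your dual variables $\alpha_u=1-g(y_v)$ and $\beta_v=g(y_v)$ with $g(y)=e^{y-1}$, your threshold $\theta_c$ is the paper's $p$ (via $p=e^{z-1}$), and your structural facts (i)--(ii) are the paper's Claims~2 and~1 in Lemma~\ref{lem:claims}. The only cosmetic difference is that the paper phrases everything in economic language (following~\cite{EFFS18}) rather than LP duality, and it proves the monotonicity fact via the one-line observation that at every step the set of exposed vertices under $M'$ contains that under $M'_{-i}$ plus one extra vertex, rather than explicitly invoking alternating paths.
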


Karp, Vazirani and Vazirani~\cite{KVV90} also presented a distribution over $G_n$, and showed that for every online algorithm, the expected size of the matching produced (expectation taken over random choice of graph from this distribution) is at most $(1 - 1/e)n + o(n)$.
This distribution, that we shall refer to as $D_n$, is defined as follows. Select uniformly at random a permutation $\tau$ over $V$. For every $j$, the neighbors of vertex $u_j$ are $\{v_{\tau(j)}, \ldots, v_{\tau(n)}\}$. The unique perfect matching $M$ is the set of edges $(u_j, v_{\tau(j)})$ for $1 \le j \le n$.

To present the known results regarding $D_n$ more accurately, let as extend previous notation.

\begin{itemize}

\item $\rho_n(A,D_n)$ is the expected cardinality of matching produced by $A$ when the input graph $G$ is selected according to distribution $D_n$. (Hence expectation is taken both over randomness of $A$ and over selection from $D_n$.) By definition, for every algorithm $A$, $\rho_n(A,D_n)$ is an upper bound on $\rho_n(A,-)$.

\item $\rho_n(-,D_n)$ is the maximum over all $A$ (randomized online algorithms for the maximizing player) of $\rho_n(A,D_n)$. Namely, $\rho_n(-,D_n) = \max_A[\rho_n(A,D_n)]$. By definition, for every $n$, $\rho_n(-,D_n)$ is an upper bound on $\rho_n$.

\end{itemize}

It is not hard to see (and was shown also in Lemma 13 of~\cite{KVV90}) that for every two greedy online algorithms $A$ and $A'$ it holds that $\rho_n(A,D_n) = \rho_n(A',D_n)$. As greedy algorithms are optimal among online algorithms, and {\em Ranking} is a greedy algorithm, we have the following proposition.

\begin{proposition}
\label{pro:indifference}
For $D_n$ defined as above,
$$\rho_n(Ranking,D_n) = \rho_n(-,D_n) \ge \rho_n$$
\end{proposition}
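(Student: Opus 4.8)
The plan is to establish the two pieces separately: the equality $\rho_n(Ranking,D_n) = \rho_n(-,D_n)$, and the inequality $\rho_n(-,D_n) \ge \rho_n$. The inequality is immediate from the definitions: for any algorithm $A$, $\rho_n(A,D_n)$ averages $\rho_n(A,G)$ over the support of $D_n$, so it is at least $\min_{G} \rho_n(A,G) = \rho_n(A,-)$; taking the max over $A$ on both sides gives $\rho_n(-,D_n) \ge \rho_n$. So the real content is the equality, and here I would invoke the two facts the excerpt has already granted us: (i) for every two greedy online algorithms $A,A'$ we have $\rho_n(A,D_n) = \rho_n(A',D_n)$, and (ii) greedy algorithms are optimal among online algorithms, i.e. every online algorithm can be converted to a greedy one matching at least as many vertices on every graph.

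First I would argue $\rho_n(-,D_n) = \max_A \rho_n(A,D_n)$ is attained (up to the technicality the paper waves away) by a greedy algorithm: given any maximizer $A$, fact (ii) supplies a greedy $A'$ with $\rho_n(A',G) \ge \rho_n(A,G)$ for every $G$ in the support of $D_n$, hence $\rho_n(A',D_n) \ge \rho_n(A,D_n)$, so $A'$ is also a maximizer. Then, since \emph{Ranking} is greedy, fact (i) gives $\rho_n(Ranking,D_n) = \rho_n(A',D_n) = \rho_n(-,D_n)$. Chaining with the inequality from the previous paragraph yields $\rho_n(Ranking,D_n) = \rho_n(-,D_n) \ge \rho_n$.

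The one step that deserves a sentence of care — and the place I expect a careful reader to push back — is the passage from "greedy algorithms all have the same value on $D_n$" to "a greedy algorithm attains the max". Fact (ii) as stated is a per-graph domination ($\rho_n(A',G)\ge\rho_n(A,G)$), and one must be sure the greedy algorithm $A'$ produced from $A$ does not depend on $G$ (otherwise it is not a legitimate online algorithm): the standard construction — whenever $A$ leaves an arriving vertex unmatched despite having an available neighbor, match it arbitrarily instead — is a uniform transformation of the algorithm, not of the input, so this is fine, but it is worth stating explicitly. Everything else is bookkeeping with the definitions of $\rho_n(A,D_n)$, $\rho_n(-,D_n)$, $\rho_n(A,-)$, and $\rho_n$, and with the fact that $D_n$ is supported on $G_n$.
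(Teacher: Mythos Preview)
Your proof is correct and follows exactly the reasoning the paper intends: the paper does not give a separate proof of this proposition at all, but simply states it as a consequence of the two facts you invoke (all greedy algorithms have the same expected value on $D_n$, and greedy algorithms are optimal), together with the observation that $\rho_n(-,D_n)\ge\rho_n$ holds by definition. You have spelled out the chain of inequalities more carefully than the paper does, but the approach is identical.
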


The result of~\cite{KVV90} can be stated as showing that $\rho_n(-,D_n) \le (1 - \frac{1}{e})n + o(n)$. Later analysis (see for example~\cite{MSVV07}, or the lecture notes of Kleinberg~\cite{Kleinberg} or Karlin~\cite{Karlin}) replaced the $o(n)$ term by $O(1)$. Moreover, this upper bound holds not only for online randomized integral algorithms (that match edges as a whole), but also for online fractional algorithms (that match fractions of edges). Let us provide more details.

A fractional matching for a bipartite graph $G(U,V;E)$ is a nonnegative weight function $w$ for the edges such that for every vertex $u \in U$ we have $\sum_{v\in N(u)} w(u,v) \le 1$, and likewise, for every vertex $v\in V$ we have $\sum_{u\in N(v)} w(u,v) \le 1$. The size of a fractional matching is $\sum_{e \in E} w(e)$. It is well known (see~\cite{LP86}, for example) that in bipartite graphs, the size of the maximum fractional matching equals the cardinality of the maximum (integral) matching.

In the online bipartite fractional matching problem, as vertices of $U$ arrive, the maximizing player can add arbitrary positive weights to their incident edges, provided that the result remains a fractional matching. We extend the $\rho$ notation used for the integral case also to the fractional case, by adding a subscript $f$. Hence for example, $\rho_{f,n}(A,G)$ is the size of the fractional matching produced by an online algorithm $A$ when $G \in G_n$ is the input graph.

It is not hard to see that in the fractional setting, randomization does not help the maximizing player, in the sense that any randomized online algorithm $A$ for fractional matching can replaced by a deterministic algorithm $A'$ that on every input graph produces a fractional matching of at least the same size. (Upon arrival of vertex $u$, the fractional weight that $A'$ adds to edge $(u,v)$ equals the expected weight that $A$ adds to this edge, where expectation is taken over randomness of $A$.)  Consequently, $\rho_{f,n} \ge \rho_n$, and every upper bound on $\rho_{f,n}$ is also an upper bound on $\rho_n$.

The following theorem summarizes the known upper bounds on $\rho_{f,n}$, which are also the strongest known upper bounds on $\rho_n$.

\begin{theorem}
\label{thm:KVV90b}
For $D_n$ as defined above, $\rho_{f,n}(-,D_n) \le (1 - \frac{1}{e})n + O(1)$.
Consequently, $\rho_{n}(-,D_n) \le (1 - \frac{1}{e})n + O(1)$.
\end{theorem}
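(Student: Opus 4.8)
The plan is to reduce the statement to a two-constraint linear program and then solve it. Fix an arbitrary online randomized fractional matching algorithm $A$. Under $D_n$, when $u_j$ arrives its neighbor set is $S_j=\{v_{\tau(j)},\ldots,v_{\tau(n)}\}$, a set of size $n-j+1$, and these sets are nested, $S_1\supseteq S_2\supseteq\cdots\supseteq S_n$. The algorithm reacts by placing nonnegative weights $w(u_j,v)$ on the edges $(u_j,v)$ with $v\in S_j$; write $x_j=\sum_{v\in S_j}w(u_j,v)$ for the total weight $u_j$ places. Feasibility at $u_j$ forces $x_j\le 1$, the final fractional matching has size $\sum_{j=1}^n x_j$, and the weight vector $(w(u_j,v))_{v\in S_j}$ is a deterministic function of the information available just after step $j$, namely of $(S_1,\ldots,S_j)$ together with the internal coins of $A$.

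The key step is a symmetry observation about the ``busiest'' vertex $v_{\tau(n)}$, which belongs to every $S_j$. Conditioned on the history through step $j$ --- that is, on $(S_1,\ldots,S_j)$ and the coins of $A$, which are independent of $\tau$ --- the partial sequence $(\tau(j),\ldots,\tau(n))$ is a uniformly random ordering of the index set of $S_j$, so $v_{\tau(n)}$ is uniform over $S_j$. Since the weights $(w(u_j,v))_{v\in S_j}$ are measurable with respect to this history,
$$\mathbb{E}\Bigl[w\bigl(u_j,v_{\tau(n)}\bigr)\,\Big|\,\text{history through step }j\Bigr]=\frac{1}{|S_j|}\sum_{v\in S_j}w(u_j,v)=\frac{x_j}{n-j+1}.$$
Taking expectations, summing over $j$, and using that the final load $\sum_{j=1}^n w(u_j,v_{\tau(n)})$ on $v_{\tau(n)}$ never exceeds $1$, I get $\sum_{j=1}^n\frac{\mathbb{E}[x_j]}{n-j+1}\le 1$. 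Writing $\bar x_j:=\mathbb{E}[x_j]$, we have $\rho_{f,n}(A,D_n)=\sum_{j=1}^n\bar x_j$, while the vector $(\bar x_j)$ satisfies $0\le\bar x_j\le 1$ for every $j$ and $\sum_{j=1}^n\frac{\bar x_j}{n-j+1}\le 1$.

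What remains is to bound $\sum_j\bar x_j$ over this polytope --- a continuous knapsack problem. Raising $\bar x_j$ from $0$ to $1$ adds $1$ to the objective and consumes $\frac{1}{n-j+1}$ of the unit budget, a rate of return of $n-j+1$ that is largest for small $j$; so the maximum is attained by setting $\bar x_1=\cdots=\bar x_J=1$ together with a fractional value for $\bar x_{J+1}$, where $J$ is the largest index with $\sum_{j=1}^J\frac{1}{n-j+1}=H_n-H_{n-J}\le 1$ ($H_m$ denoting the $m$-th harmonic number), and the optimal value is at most $J+1$. The estimate $H_m=\ln m+\gamma+\Theta(1/m)$ turns $H_n-H_{n-J}\le 1$ into $\ln\frac{n}{n-J}\le 1+O(1/n)$, hence $n-J\ge\frac{n}{e}-O(1)$ and $J\le(1-\frac{1}{e})n+O(1)$. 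Thus $\rho_{f,n}(A,D_n)\le(1-\frac{1}{e})n+O(1)$ for every $A$, i.e.\ $\rho_{f,n}(-,D_n)\le(1-\frac{1}{e})n+O(1)$; the second claim follows since an integral matching is in particular a fractional matching, so $\rho_n(-,D_n)\le\rho_{f,n}(-,D_n)$.

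The only genuinely delicate point --- and the main obstacle --- is the conditional-expectation identity for $w(u_j,v_{\tau(n)})$: it is exactly where being online hurts, since at step $j$ the algorithm cannot tell the vertices of $S_j$ apart and therefore cannot steer weight toward the still-hidden long-lived vertex $v_{\tau(n)}$. Everything after that is a one-line LP computation together with a harmonic-number asymptotic. Incidentally, the additive $O(1)$ can be sharpened toward the tight value $\frac{1}{2}-\frac{1}{2e}+O(\frac{1}{n})$ promised in the abstract by also invoking the capacity constraints of the other vertices $v_{\tau(k)}$, but the single-vertex argument above already yields the stated bound.
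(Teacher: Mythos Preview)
Your proof is correct. The conditional-uniformity claim for $v_{\tau(n)}$ is sound (conditioning on $(S_1,\ldots,S_j)$ fixes $\tau(1),\ldots,\tau(j-1)$ but leaves $(\tau(j),\ldots,\tau(n))$ a uniform ordering of the indices in $S_j$, and the algorithm's coins are independent of $\tau$), and the fractional-knapsack bound $J+1$ with $H_n-H_{n-J}\le 1$ indeed yields $(1-\tfrac{1}{e})n+O(1)$.

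The paper takes a different but closely related route. Rather than isolating a single vertex and writing down an LP, it tracks the \emph{average load} of all still-active $V$-vertices, observes that this average increases by at most $1/(n-i+1)$ at step $i$ and is unchanged in expectation when a random active vertex becomes inactive, and concludes that \emph{Balance} is the exactly optimal algorithm against $D_n$; it then computes Balance's value on \emph{MonotoneG} directly. Your constraint $\sum_j \bar x_j/(n-j+1)\le 1$ is just the statement that the expected final load on $v_{\tau(n)}$ is at most~$1$, which rests on the same symmetry principle the paper uses, applied to one distinguished vertex instead of averaged over all active ones. Your formulation is the one commonly presented in lecture notes (the Kleinberg and Karlin references the paper cites); the paper's formulation buys the identification of Balance as the extremal algorithm.

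One remark on your closing comment: you do not in fact need the capacity constraints of the other vertices $v_{\tau(k)}$ to reach the tight additive term. If instead of bounding your knapsack optimum crudely by $J+1$ you compute it exactly, you get
\[
J+(n-J)\bigl(1-(H_n-H_{n-J})\bigr),
\]
which is precisely the expression the paper derives for Balance on \emph{MonotoneG} and which already evaluates to $(1-\tfrac{1}{e})n+\tfrac{1}{2}-\tfrac{1}{2e}+O(\tfrac{1}{n})$. So your single-constraint LP is tight.
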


The combination of Theorems~\ref{thm:KVV90a} and~\ref{thm:KVV90b} implies the following corollary:

\begin{corollary}
\label{cor:KVV}
Using notation as above, $\rho = 1 - \frac{1}{e}$ and $\rho_f = 1 - \frac{1}{e}$.
The {\em Ranking} algorithm (which produces an integral matching) is asymptotically optimal (for the maximizing player) for online bipartite matching both in the integral and in the fractional case.
The distribution $D_n$ is asymptotically optimal for the minimizing player, both in the integral and in the fractional case.
\end{corollary}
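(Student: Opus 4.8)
The plan is to chain the estimates already established: the corollary follows by combining Theorems~\ref{thm:KVV90a} and~\ref{thm:KVV90b} with Proposition~\ref{pro:indifference} and the elementary inequality $\rho_n \le \rho_{f,n}$ (recorded in the discussion preceding Theorem~\ref{thm:KVV90b}: the integral matching output by a randomized online algorithm is in particular a fractional matching, and derandomization does not shrink the expected size). First I would establish the two-sided bound $(1-\frac{1}{e})n \le \rho_n \le \rho_{f,n} \le (1-\frac{1}{e})n + O(1)$, valid for all $n$. The left inequality is immediate from Theorem~\ref{thm:KVV90a}, since $\rho_n \ge \rho_n(Ranking,-) \ge (1-\frac{1}{e})n$. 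The right inequality comes from Theorem~\ref{thm:KVV90b}: because $D_n$ is one particular distribution over $G_n$, we have $\rho_{f,n} \le \rho_{f,n}(-,D_n) \le (1-\frac{1}{e})n + O(1)$, with the $O(1)$ term uniform in $n$.

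Next I would read off the values of $\rho$ and $\rho_f$. Dividing the displayed chain by $n$ gives $1-\frac{1}{e} \le \frac{\rho_n}{n} \le 1-\frac{1}{e}+O(\frac{1}{n})$ for every $n$, so $\rho=\inf_n \frac{\rho_n}{n} = 1-\frac{1}{e}$: the sequence is bounded below by $1-\frac{1}{e}$ and approaches it, so its infimum is exactly $1-\frac{1}{e}$ (and, by the squeeze, $\lim_n \frac{\rho_n}{n}=1-\frac{1}{e}$ as well, which also justifies the parenthetical remark in the introduction that the two candidate definitions of $\rho$ coincide). The identical computation with the subscript $f$ yields $\rho_f = 1-\frac{1}{e}$.

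Finally I would deduce the optimality claims. \emph{Ranking} is asymptotically optimal in the integral case because it guarantees $\rho_n(Ranking,-)\ge(1-\frac{1}{e})n$ while no online algorithm achieves more than $\rho_n \le (1-\frac{1}{e})n + O(1)$; thus \emph{Ranking} is within an additive constant of the best possible. The same conclusion holds in the fractional case, since \emph{Ranking}'s integral output is also a fractional matching and $\rho_{f,n}\le(1-\frac{1}{e})n+O(1)$. Dually, $D_n$ is asymptotically optimal for the minimizing player: by Theorem~\ref{thm:KVV90b}, against $D_n$ even the best (integral, and even fractional) online algorithm is held to $(1-\frac{1}{e})n+O(1)$, whereas against an arbitrary input distribution the best online algorithm can always guarantee at least $\rho_n\ge(1-\frac{1}{e})n$ (respectively $\rho_{f,n}$ in the fractional case), so $D_n$ costs the minimizing player only an additive constant relative to its optimal strategy.

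Since each step is a direct substitution into inequalities already proven, there is no genuine obstacle here; the only point requiring care is that the constant hidden by the $O(1)$ in Theorem~\ref{thm:KVV90b} does not depend on $n$, which is precisely what makes dividing by $n$ and passing to $\inf_n$ (or $\lim_n$) legitimate.
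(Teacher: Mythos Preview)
Your proposal is correct and follows exactly the approach the paper itself indicates: the paper does not give a separate proof of Corollary~\ref{cor:KVV} but simply states that it follows from the combination of Theorems~\ref{thm:KVV90a} and~\ref{thm:KVV90b}, and you have spelled out that combination carefully. The only point worth noting is that the inequality $\rho_{f,n}\le\rho_{f,n}(-,D_n)$ you invoke is the fractional analogue of Proposition~\ref{pro:indifference}; the paper does not state it explicitly, but it is the same Yao-type argument (the minimum over $G$ is at most the average over any distribution on $G$), so this is a harmless and correct extension.
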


In this manuscript, we shall be interested not only in the asymptotic ratios $\rho$ and $\rho_f$, but also in the exact ratios $\rho_n$ and $\rho_{f,n}$. Every (integral) matching is also a fractional matching, hence one may view {\em Ranking} also as an online algorithm for fractional matching.
As such, {\em Ranking} is easily seen not to be optimal for some $n$. For example, when $n = 4$, tedious but straighforward analysis shows that a different known algorithm referred to as {\em Balance} (see Section~\ref{sec:KVVb}) satisfies $\rho_{f,4}(Balance,-) > \rho_{f,4}(Ranking,-)$ (details omitted). However, for the integral case, it was conjectured in~\cite{KVV90} that both {\em Ranking} and $D_n$ are optimal for every $n$. Namely, the conjecture is:

\begin{conjecture}
\label{con:KVV}
$\rho_n = \rho_n(Ranking,D_n)$ for every $n$.
\end{conjecture}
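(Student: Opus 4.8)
The plan is to attack Conjecture~\ref{con:KVV} through the minimax structure of the underlying game. For each fixed $n$ the online matching problem is a finite zero-sum game: the maximizing player picks a distribution over deterministic greedy online algorithms, the minimizing player picks a distribution $\mathcal{D}$ over graphs in $G_n$, and the payoff $\rho_n(A,\mathcal{D})$ is bilinear. Von Neumann's minimax theorem then gives $\rho_n = \max_A \min_{G \in G_n} \rho_n(A,G) = \min_{\mathcal{D}} \max_A \rho_n(A,\mathcal{D})$. Since Proposition~\ref{pro:indifference} already supplies $\rho_n(Ranking,D_n) = \rho_n(-,D_n) \ge \rho_n$, proving the conjecture is equivalent to proving the reverse inequality, i.e.\ that $D_n$ is an optimal strategy for the minimizing player. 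By the duality above it suffices to exhibit a single online algorithm $A^*$ with $\rho_n(A^*,G) \ge \rho_n(Ranking,D_n)$ for every $G \in G_n$; the natural candidate is $A^* = Ranking$ itself, and then the goal becomes $\rho_n(Ranking,-) \ge \rho_n(Ranking,D_n)$, which, because $\rho_n(Ranking,-) \le \rho_n(Ranking,D_n)$ holds trivially (the right side is an average, over the support of $D_n$, of quantities bounded below by the left side), is really an equality.

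Next I would reduce this to a purely extremal statement about one graph. All graphs in the support of $D_n$ are isomorphic: each is obtained from the staircase graph $T_n$ with $N(u_j) = \{v_j,\dots,v_n\}$ by relabeling $V$ through a permutation $\tau$, and \emph{Ranking} is invariant under relabelings of $V$, so $\rho_n(Ranking,G) = \rho_n(Ranking,T_n)$ for every $G$ in the support and hence $\rho_n(Ranking,D_n) = \rho_n(Ranking,T_n)$. The exact formula for $\rho_n(Ranking,D_n)$ established in this paper then makes the target explicit: one must show $\rho_n(Ranking,G) \ge (1-\tfrac1e)n + 1 - \tfrac2e + O(\tfrac1{n!})$ for every $G \in G_n$, that is, that $T_n$ is a worst-case input for \emph{Ranking}.

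To establish that $T_n$ is worst-case, I would attempt a structural reduction: show that an arbitrary $G \in G_n$ can be transformed, by a sequence of local moves each of which does not increase $\rho_n(Ranking,\cdot)$, into a copy of $T_n$. The natural moves are permuting the arrival order of the vertices of $U$ and deleting or adding individual edges subject to retaining a perfect matching. Since the effect of any single move on \emph{Ranking} must be analyzed over the random permutation $\pi$, the tool I would reach for is a coupling of the permutations across the two instances combined with an alternating-path/displacement analysis tracking exactly which vertices flip between matched and exposed; alternatively, the randomized primal-dual framework (assigning each vertex a gain determined by the rank of its matched partner) might be pushed toward an exact rather than merely asymptotic accounting that is then compared to the explicit value for $T_n$.

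The main obstacle is precisely this last step. \emph{Ranking} is not monotone under edge addition --- there are graphs where inserting an edge strictly decreases the expected matching size --- so one cannot argue that denser graphs are easier and reduce to a sparsest member of $G_n$; nor is $T_n$ sparsest, since a bare perfect matching has far fewer edges and yet \emph{Ranking} matches all of it. Identifying an invariant that is simultaneously monotone under the reduction moves and minimized at $T_n$, and controlling the lower-order terms tightly enough to meet the exact formula, is exactly the difficulty that has kept Conjecture~\ref{con:KVV} open; the minimax reduction above clarifies what must be proved but does not itself resolve it.
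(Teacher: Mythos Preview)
The statement you were asked to prove is Conjecture~\ref{con:KVV}, and the paper does \emph{not} prove it: the text explicitly says ``The above conjecture, though still open, adds motivation \ldots''. There is therefore no proof in the paper to compare your attempt against.

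Your proposal is not a proof either, and you say so yourself in the final paragraph. What you have written is a correct reformulation of the problem together with an honest assessment of why it is hard. The minimax reduction is valid (the game is finite once one restricts to greedy deterministic algorithms, so von Neumann applies), and combined with Proposition~\ref{pro:indifference} it does reduce the conjecture to showing that $D_n$ is an optimal strategy for the minimizer, for which a sufficient condition is that \emph{MonotoneG} minimizes $\rho_n(Ranking,\cdot)$ over $G_n$. Your observation that \emph{Ranking} is not monotone under edge addition, so that no naive density argument can work, is also correct and is indeed part of why the conjecture remains open.

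One small caveat: the sufficient condition you isolate (that \emph{Ranking} itself attains $\rho_n(Ranking,D_n)$ on every $G$) is strictly stronger than the conjecture as stated. The equality $\rho_n=\rho_n(Ranking,D_n)$ only requires $D_n$ to be minimax-optimal for the adversary; it could in principle hold with some algorithm other than \emph{Ranking} witnessing the lower bound. Your reduction to ``$T_n$ is worst-case for \emph{Ranking}'' is thus one natural route, but not the only one, and it is worth keeping that distinction in mind.
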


The above conjecture, though still open, adds motivation (beyond Proposition~\ref{pro:indifference}) to determine the exact value of $\rho_n(Ranking,D_n)$.
This is done in the following theorem.

\begin{theorem}
\label{thm:exact}
Let the function $a(n)$ be such that $\rho_n(Ranking,D_n) = \frac{a(n)}{n!}$ for all $n$. Then $a(n) = (n+1)! - d(n+1) - d(n)$, where $d(n)$ is the number of derangements (permutations with no fixed points) on the numbers $[1,n]$. Consequently, $\rho_n(-,D_n) = (1 - \frac{1}{e})n + 1 - \frac{2}{e} + O(\frac{1}{n!}) \simeq (1 - \frac{1}{e})n  + 0.264$, and this is also an upper bound on $\rho_n$.
\end{theorem}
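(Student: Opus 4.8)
The plan is to first turn the computation of $\rho_n(Ranking,D_n)$ into a question about a single uniformly random permutation, and only then to evaluate the resulting sum. Running \emph{Ranking}, with its random rank permutation $\pi$, against a graph drawn from $D_n$, with its random permutation $\tau$, is equivalent to the following. Re‑index the vertices of $V$ by their $\pi$‑rank, writing $w_k$ for the vertex of rank $k$; then $w_k$ is adjacent exactly to $u_1,\dots,u_{d_k}$ where $d_k=(\pi\circ\tau)^{-1}(k)$, so as $\pi,\tau$ range independently over all permutations the ``deadline'' vector $d=(d_1,\dots,d_n)$ is uniform over all permutations of $[n]$. In this re‑indexing \emph{Ranking} becomes the deterministic greedy rule ``match each arriving $u_j$ to its available neighbor of smallest index'' (one could instead appeal to the indifference of greedy online algorithms on $D_n$ and adopt this rule from the start). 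Hence $a(n)=n!\cdot\rho_n(Ranking,D_n)=\sum_{d}X(d)$, the sum ranging over all permutations $d$ of $[n]$, where $X(d)$ is the size of the matching this greedy rule produces on the staircase graph $w_k\sim u_j\iff j\le d_k$.

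Next I would describe $X(d)$ explicitly. Scan $w_1,w_2,\dots,w_n$ in order, keeping a counter $c$ initialized to $1$, and increment $c$ each time $d_k\ge c$; a short induction, using a standard exchange argument for the tie‑breaking rule, shows that $w_k$ is matched precisely at such an increment, that the matched vertices of $U$ are then exactly $u_1,\dots,u_{X(d)}$, and that $X(d)$ equals the final value of $c$ minus $1$, equivalently the maximum length of a subsequence $a_1,\dots,a_m$ of $d$ (taken in order of position) with $a_i\ge i$ for all $i$. Since $X(d)=\sum_{k=1}^n[\,X(d)\ge k\,]$, this yields $a(n)=\sum_{k=1}^n N_{n,k}$, where $N_{n,k}$ is the number of permutations of $[n]$ admitting a subsequence of length $k$ whose $i$‑th term is at least $i$.

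The remaining and main task is to evaluate $\sum_k N_{n,k}$. Fixing $k$ and looking at the greedy chain $p_1<\dots<p_x$ of positions at which the counter increments (so $x=X(d)$), the rule forces every position strictly between $p_i$ and $p_{i+1}$, and every position after $p_x$, to carry a value at most $i$, while $p_i$ carries a value at least $i$. Counting the value‑assignments compatible with a given pattern of block lengths, then summing over patterns and over $k$, reduces $a(n)$ to a standard permutation count: one should obtain $a(n)=\#\{\sigma\in S_{\{0,1,\dots,n\}}:\sigma(i)=i\text{ for some }i\in\{1,\dots,n\}\}$ (by an inclusion--exclusion on the positions holding the ``small'' values, or by a direct bijection), after which the usual derangement inclusion--exclusion evaluates this as $(n+1)!-d(n+1)-d(n)$. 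I expect the genuine work to be exactly this evaluation --- organizing the block‑pattern sum and recognizing the derangement numbers --- whereas the reduction in the first two paragraphs is routine. Finally, substituting $d(m)=m!\sum_{i=0}^m(-1)^i/i!$ gives $\rho_n(Ranking,D_n)=(1-\tfrac1e)n+1-\tfrac2e+O(\tfrac1{n!})$, and by Proposition~\ref{pro:indifference} this quantity equals $\rho_n(-,D_n)$ and bounds $\rho_n$ from above.
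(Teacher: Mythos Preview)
Your reduction to a single uniformly random permutation and your counter description of $X(d)$ are correct, and the decomposition $a(n)=\sum_k N_{n,k}$ (counting permutations in which $u_k$ is matched) is a legitimate alternative to the paper's decomposition $a(n)=\sum_i a(n,i)$ (counting permutations in which the rank-$i$ vertex of $V$ is matched). The target identity you write down, $a(n)=\#\{\sigma\in S_{\{0,\dots,n\}}:\sigma(i)=i\text{ for some }i\in[n]\}$, is also correct and is equivalent to what the paper ultimately proves. However, the step you flag as ``the genuine work'' is not merely unfinished bookkeeping --- it is the entire content of the theorem, and your outline does not supply a mechanism for it. The block-pattern count you describe is a permanent with staircase constraints (value $v$ may sit at $p_i$ iff $v\ge i$, and in gap $i$ iff $v\le i$), and summing $m$ times this permanent over all gap patterns does not visibly collapse; nor do you indicate what the ``direct bijection'' between pairs $(d,j)$ with $j\le X(d)$ and permutations of $\{0,\dots,n\}$ with a fixed point in $[n]$ would be. As written, the proposal asserts the answer and defers the proof.

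The paper takes a genuinely different route that avoids this obstacle. Rather than slicing by matching size, it slices by rank position and proves the recurrence $a(n,i)=a(n,i+1)+a(n-1,i)$ via a bijection that swaps ranks $i$ and $i+1$ and analyses when the swap destroys a match. It then observes that the numbers $d(n,i)$ (permutations of $[n]$ whose fixed points lie among the first $i$ items) obey the \emph{same} recurrence with the \emph{same} boundary, giving $a(n,i)=d(n,n+1-i)$ by induction; summing telescopes to $a(n)=(n+1)!-d(n+1,1)=(n+1)!-d(n+1)-d(n)$. The paper also records (in an appendix) a direct bijection between $\Pi_n\times[n+1]$ and $\Pi_{n+1}$ that proves $a(n)=(n+1)!-a(n+1,n+1)$ without the recurrence, but it still needs the recurrence (or an equivalent argument) to identify $a(n+1,n+1)$ with $d(n+1)+d(n)$. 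If you want to salvage your approach, the cleanest path is probably to abandon the block-pattern sum and instead look for a recurrence on $N_{n,k}$, or to exhibit the bijection you allude to; otherwise, the rank-position recurrence is where the actual combinatorics lives.
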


The rest of this paper is organized as follows.  In Section~\ref{sec:KVVb} we review a proof of Theorem~\ref{thm:KVV90b}. In doing so, we determine the value of the $O(1)$ term stated in the theorem, and also show that the upper bound is tight. Hence we end up proving the following theorem:

\begin{theorem}
\label{thm:fractionalopt}
For every $n$, {\em Balance} is the fractional online algorithm with best approximation ratio, $D_n$ is the distribution over graphs for which the approximation ratio is worst possible, and
$$\rho_{f,n} = \rho_{f,n}(Balance,D_n) = (1 - \frac{1}{e})n + \frac{1}{2} - \frac{1}{2e} + O(\frac{1}{n}) \simeq (1 - \frac{1}{e})n + 0.316$$
\end{theorem}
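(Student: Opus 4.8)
The plan is to analyze the behavior of the online fractional algorithm on the support of $D_n$ directly, exploiting the strong structure of the graphs in that support. Recall that a graph in the support of $D_n$ is determined by a permutation $\tau$ of $V$, and the neighborhood of $u_j$ is $\{v_{\tau(j)},\dots,v_{\tau(n)}\}$; after relabeling the $v$'s according to $\tau$, we may assume $N(u_j)=\{v_j,\dots,v_n\}$ (a "staircase" graph), so that up to isomorphism there is only one graph in the support, and $\rho_{f,n}(A,D_n)=\rho_{f,n}(A,\text{staircase})$ for every greedy online $A$. The crucial point (the fractional analogue of Lemma 13 of~\cite{KVV90}, already invoked in Proposition~\ref{pro:indifference}) is that an online algorithm, upon seeing $u_j$ with neighborhood $\{v_j,\dots,v_n\}$, has no information to distinguish the $v_i$'s among $i\ge j$ except through the residual capacities already consumed; hence the water level (total weight) on vertex $v_i$ after step $j$ depends only on $i$ and $j$, not on the identities. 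This lets me replace the whole randomized/general online question by tracking a single deterministic sequence of "loads" $\ell_i^{(j)}$, the capacity of $v_i$ consumed after the first $j$ arrivals.

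First I would set up the recursion for \emph{Balance} (the water-filling algorithm): when $u_j$ arrives, it raises the weights on its available neighbors $v_j,\dots,v_n$ so as to equalize their loads as much as possible, subject to adding total weight at most $1$ and never exceeding load $1$ on any vertex. Because the loads on $v_j,\dots,v_n$ after step $j-1$ are already sorted in a predictable way (the "older" vertices $v_j,v_{j+1},\dots$ carry more load), \emph{Balance} simply pours one unit of water into the lowest available levels. I would solve this recursion to get a closed form (or a clean integral/sum approximation) for $s_n:=\sum_i \ell_i^{(n)}=\rho_{f,n}(\text{Balance},D_n)$. The standard continuous relaxation gives the leading term $(1-\tfrac1e)n$; the job here is to keep the lower-order terms, and I expect $s_n=(1-\tfrac1e)n+\tfrac12-\tfrac1{2e}+O(\tfrac1n)$ to drop out of an Euler–Maclaurin-type comparison between the discrete sum and $\int_0^n(1-e^{-x/n})\,dx$, or equivalently from summing a geometric-like series with ratio $(1-\tfrac1n)$ and being careful about the first/last terms.

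Second, I would prove optimality of \emph{Balance} among all online fractional algorithms on this instance. Here I would use the "indifference" reduction above to argue that \emph{any} online fractional algorithm is, on the staircase graph, equivalent to a deterministic rule that after step $j$ produces some load profile $(\ell_j^{(j)},\dots,\ell_n^{(j)})$ with $\sum_{i\ge j}(\ell_i^{(j)}-\ell_i^{(j-1)})\le 1$ and $0\le\ell_i^{(j)}\le1$. I would then show by a convexity/exchange argument (or a direct inductive domination argument on the sorted load vectors) that, among all such profiles, the one produced by \emph{Balance} maximizes the final total $\sum_i\ell_i^{(n)}$ — the intuition being that keeping the available loads as equal as possible preserves the most "room" for future arrivals and is exactly the greedy choice that a majorization argument rewards. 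Combined with the fact (stated in the excerpt) that $\rho_{f,n}(-,D_n)$ upper-bounds $\rho_{f,n}$, and that $D_n$ is the worst distribution (which follows because any hard instance can be symmetrized into $D_n$ by the same indifference principle, or simply from Corollary~\ref{cor:KVV}'s framework together with a Yao-type argument that the stated matching value on $D_n$ cannot be exceeded on any single graph), this yields $\rho_{f,n}=\rho_{f,n}(\text{Balance},D_n)$ with the claimed value.

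The main obstacle I anticipate is the optimality half: showing no online fractional algorithm beats \emph{Balance} on the staircase instance. The indifference lemma tells us the state space is just the load profile, but proving that water-filling is optimal requires a clean potential function or majorization invariant — something like "for every $t$, the sum of the $t$ largest loads under \emph{Balance} is at most that under any other strategy" (so \emph{Balance}'s profile is majorized by every other reachable profile), maintained inductively across arrivals. Getting the induction to go through at the arrival step, where an adversary strategy might "waste" capacity on high vertices, is the delicate point; the rest (the recursion for \emph{Balance} and its asymptotic expansion to the $\tfrac12-\tfrac1{2e}$ term) is routine calculus.
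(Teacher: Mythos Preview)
Your plan correctly identifies two of the three pieces needed, but misidentifies where the real difficulty lies and leaves a genuine gap.

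You need three ingredients: (a) compute $\rho_{f,n}(\text{Balance},D_n)$ with its lower-order term; (b) show Balance is optimal on $D_n$, i.e.\ $\rho_{f,n}(-,D_n)=\rho_{f,n}(\text{Balance},D_n)$; and (c) show MonotoneG is the \emph{worst} graph for Balance, i.e.\ $\rho_{f,n}(\text{Balance},-)=\rho_{f,n}(\text{Balance},D_n)$. Only (b) and (c) together pin down $\rho_{f,n}$ exactly, since $\rho_{f,n}(\text{Balance},-)\le\rho_{f,n}\le\rho_{f,n}(-,D_n)$. You handle (a) essentially as the paper does. For (b) you propose a majorization invariant; the paper instead uses a one-paragraph average-load argument (each round the average load of the still-active $v$'s rises by at most $1/a(i)$, and removing a uniformly random active vertex preserves the expected average, so Balance attains the extremal trajectory). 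Either route should work, though note that your reduction ``$\rho_{f,n}(A,D_n)=\rho_{f,n}(A,\text{staircase})$'' is only valid for name-oblivious $A$; on the fixed staircase graph a name-aware algorithm can simply output the perfect matching, so the argument for (b) must genuinely use the randomness of $D_n$, not a single graph.

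The actual gap is (c). Your one-line justifications both fail. ``Symmetrizing any hard instance into $D_n$ by the indifference principle'' is not a thing: the indifference lemma says all greedy algorithms perform identically on $D_n$, not that all graphs can be reduced to $D_n$. The Yao-type argument gives only the upper bound $\rho_{f,n}\le\rho_{f,n}(-,D_n)$; to get equality you still need some algorithm whose worst-case performance matches this value, and the natural candidate Balance requires exactly (c). In the paper, (c) is the bulk of the proof: one first shows the worst graph for Balance may be assumed to have no backward edges, then proves a monotonicity lemma ($m_i(i)$ is nondecreasing in $i$ and dominates $m_i(j)$), and finally introduces an auxiliary ``averaging process'' that on any such $G$ produces a fractional matching no larger than Balance does, yet coincides with Balance on MonotoneG and is provably no smaller on $G$ than on MonotoneG. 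None of this is captured by your outline, and it is the step you flagged as routine rather than the one you flagged as the main obstacle.
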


In Section~\ref{sec:exact} we prove Theorem~\ref{thm:exact}. The combination of Theorems~\ref{thm:exact} and~\ref{thm:fractionalopt} implies that $\rho_n < \rho_{f,n}$ for sufficiently large $n$. It also implies that $\rho_{f,n}(Balance,D_n) > \rho_{f,n}(Ranking,D_n)$ for sufficiently large $n$. Hence Proposition~\ref{pro:indifference} does not extend to online fractional matching.

In an appendix (Section~\ref{sec:KVVa}) we review a proof (due to~\cite{EFFS18}) of Theorem~\ref{thm:KVV90a}, and derive from it an upper bound of $(1 - \frac{1}{e})n + \frac{1}{e}$ on $\rho_n(Ranking,D_n)$. This last upper bound is weaker than the upper bounds of Theorems~\ref{thm:exact} and~\ref{thm:fractionalopt}, but its proof is different, and hence might turn out useful in attempts to resolve Conjecture~\ref{con:KVV}.

\subsection{Preliminaries -- {\em MonotoneG}}
\label{sec:MonotoneG}

When analyzing $\rho_n(Ranking,D_n)$ we shall use the following observation so as to simplify notation. Because {\em Ranking} is oblivious to names of vertices, the expected size of the matching produced by {\em Ranking} on every graph in the support of $D_n$ is the same. Hence we shall consider one representative graph from $D_n$, that we refer to as the monotone graph {\em MonotoneG}, in which $\gamma$ (in the definition of $D_n$) is the identity permutation. The monotone graph $G(U,V;E)$ satisfies $E = \{(u_i,v_j) \; | \; j \ge i\}$, and its unique perfect matching is $M = \{(u_i,v_i) \: | \; 1 \le i \le n \}$. Statements involving $\rho_n(Ranking,D_n)$ will be replaced by $\rho_n(Ranking,MonotoneG)$, as both expressions have the same value.

Likewise, the algorithm {\em Balance} is oblivious to names of vertices, and statements involving $\rho_{f,n}(Balance,D_n)$ will be replaced by $\rho_{f,n}(Balance,MonotoneG)$.



\section{Online fractional matchings}
\label{sec:KVVb}

Let us present a specific online fractional matching algorithm that is often referred to as {\em Balance}, which is the natural fractional analog of an algorithm by the same name introduced in~\cite{KP00}. {\em Balance} maintains a {\em load} $\ell(v)$ for every vertex $v \in V$, equal to the sum of weights of edges incident with $v$. Hence at all times, $0 \le \ell(v) \le 1$.
Upon arrival of a vertex $u$ with a set of neighbors $N(u)$, {\em Balance} distributes a weight of $\min[1, |N(u)| - \sum_{v\in N(u)} \ell(v)]$ among the edges incident with $u$, maintaining the resulting loads as balanced as possible. Namely, one computes a threshold $t$ such that $\sum_{v \in N(u) | \ell(v) < t} (t - \ell(v)) = \min[1, |N(u)| - \sum_{v\in N(u)} \ell(v)]$, and then adds fractional value $t - \ell(v)$ to each edge $(u,v)$ for those vertices $v \in N(u)$ that have load below $t$.

We first present a proof of Theorem~\ref{thm:KVV90b} based on previous work. The theorem is restated below, with the additive $O(1)$ term instantiated. Previous work either did not specify the $O(1)$ additive term (e.g., in~\cite{Karlin}), or derived an $O(1)$ term that is not tight (e.g., in~\cite{Kleinberg}).

\begin{theorem}
\label{thm:KVVb1}
For every $n$ it holds that
$$\rho_{f,n}(-,D_n) = (1 - \frac{1}{e})n + \frac{1}{2} - \frac{1}{2e} + O(\frac{1}{n}) \simeq (1 - \frac{1}{e})n + 0.316$$
Moreover, $\rho_{f,n}(-,D_n) = \rho_{f,n}(Balance,D_n)$.
\end{theorem}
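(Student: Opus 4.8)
The plan is to split the statement into two parts: (a) that \emph{Balance} is an optimal online fractional algorithm against the distribution $D_n$, so that $\rho_{f,n}(-,D_n) = \rho_{f,n}(Balance,D_n) = \rho_{f,n}(Balance,MonotoneG)$; and (b) an exact evaluation of $\rho_{f,n}(Balance,MonotoneG)$ together with its asymptotic expansion.

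For (a) I would follow earlier work and exploit the following feature of $D_n$: when $u_i$ arrives, $N(u_i)$ is exactly the set of $V$-vertices that have not yet ``departed'' (lost their last not-yet-arrived neighbor), and, conditioned on the entire history, the next vertex to depart, $v_{\tau(i)}$, is uniformly distributed over $N(u_i)$. Hence, after symmetrizing over relabelings of $V$ (which leaves $\rho_{f,n}(A,D_n)$ unchanged, since $D_n$ is relabeling-invariant) and derandomizing (which is free in the fractional setting), the problem reduces to the following one-player game: $n$ vertices each start with \emph{residual capacity} $1$; in each of $n$ rounds the player chooses a vector $0\le w_j\le c_j$ with $\sum_j w_j\le 1$, collects $\sum_j w_j$, subtracts $w$ from the capacities, and then a uniformly random vertex is deleted. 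Writing $F_k(c_1,\dots,c_k)$ for the optimal expected total collected from a state with $k$ vertices of capacities $c_1,\dots,c_k$, I would show by induction on $k$ that $F_k$ is symmetric and concave, hence Schur-concave; it then follows that an optimal move simultaneously maximizes $\sum_j w_j$ (forcing $\sum_j w_j=\min[1,\sum_j c_j]$) and makes the residual capacities $c_j-w_j$ as equal as the constraints allow --- which is precisely the rule \emph{Balance} uses. Specializing to the all-ones start gives $\rho_{f,n}(-,D_n)=F_n(1,\dots,1)=\rho_{f,n}(Balance,MonotoneG)$.

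For (b) I would solve \emph{Balance}'s load recursion on $MonotoneG$ directly. By induction, just before $u_i$ arrives the vertices $v_i,\dots,v_n$ all carry the common load $y_i=H_n-H_{n+1-i}$ (with $H_m$ the $m$-th harmonic number), as long as the ``room'' $(n-i+1)(1-y_i)$ is at least $1$; let $p$ be the first round where this fails, equivalently (setting $q=n-p$) the unique integer with $H_n-H_{q+1}\le 1< H_n-H_q$. At round $p$ the vertices $v_p,\dots,v_n$ are all filled to load exactly $1$ and no weight is ever added afterwards. Summing the final loads $\ell(v_j)=y_{j+1}$ for $j<p$ and $\ell(v_j)=1$ for $j\ge p$, and simplifying with $\sum_{m=1}^{N}H_m=(N+1)H_N-N$, yields the compact form
$$\rho_{f,n}(Balance,MonotoneG) = (n+1) - (q+1)\,(H_n - H_q),$$
which one can sanity-check against small cases (e.g. $13/6$ for $n=3$, $17/6$ for $n=4$).

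Finally, for the asymptotics I would expand $H_n-H_q=\ln(n/q)+\frac{1}{2n}-\frac{1}{2q}+O(n^{-2})$ and write $H_n-H_q=1+\delta$, where by the definition of $q$ we have $\delta\in(0,\frac{1}{q+1}]$, so $\delta=O(1/n)$. Solving for $q$ gives $q=\frac{n}{e}-\frac{n\delta}{e}-\frac12+\frac{1}{2e}+O(1/n)$, and substituting into $\rho_{f,n}(Balance,MonotoneG)=n-q-(q+1)\delta$ the two $\frac{n\delta}{e}$ contributions cancel, leaving $(1-\frac1e)n+\frac12-\frac{1}{2e}+O(1/n)$, as claimed. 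I expect the main obstacle to be neither part (a) (the concavity/Schur-concavity induction is routine) nor the harmonic-number bookkeeping of part (b), but rather pinning down the additive \emph{constant} exactly in this last step: one must retain the $\frac{1}{2n}$ terms of the harmonic expansion and verify that the contribution of the ``overshoot'' $\delta$ (by how much the truncation threshold exceeds $1$) cancels out, rather than leaving a $\Theta(1)$ residue depending on the fractional part of $n(1-1/e)$; that cancellation is what makes the answer the clean quantity $\frac12-\frac{1}{2e}$.
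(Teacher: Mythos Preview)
Your proposal is correct, and part (b) follows essentially the same path as the paper: run \emph{Balance} on \emph{MonotoneG}, observe that the common load after $i$ rounds is $H_n - H_{n-i}$, locate the cutoff round, and expand using $H_m = \ln m + \gamma + \frac{1}{2m} + O(1/m^2)$. Your closed form $(n+1)-(q+1)(H_n-H_q)$ and your explicit handling of the overshoot $\delta$ (with the cancellation of the two $\frac{n\delta}{e}$ contributions) are in fact tidier than the paper's version, which simply plugs in $k=(1-\tfrac1e)n$, computes, and then argues separately that rounding $k$ to an integer perturbs the answer by only $O(1/n)$.

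Part (a) is where you genuinely diverge. The paper does \emph{not} set up a value function or invoke Schur-concavity. Instead it tracks a single scalar: the expected \emph{average load} $L(i)$ of the currently active $V$-vertices. In round $i$ any online algorithm can raise this average by at most $\frac{1}{n-i+1}$ (with equality for \emph{Balance} while $L(i)<1$), and removing a uniformly random active vertex leaves the average unchanged in expectation. Hence, round by round, \emph{Balance} maximizes the expected average load, and therefore minimizes the expected unused capacity $1-L(i)$ of the vertex that becomes inactive in that round; summing over rounds via linearity of expectation gives optimality. This argument is shorter and more elementary than yours --- it needs no concavity induction, only linearity of expectation. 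Your route, by contrast, is more structural: the concavity/Schur-concavity of $F_k$ would yield optimality of \emph{Balance} from \emph{any} vector of residual capacities, not just the all-equal state reached from $D_n$, and it explains intrinsically (via majorization) why ``equalize the residuals'' is the right move, rather than deducing it from the special symmetry of $D_n$. Both approaches are valid; the paper's is the quicker one for this particular theorem.
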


\begin{proof}
For all graphs in the support of $D_n$, the size of the fractional matching produced by {\em Balance} is the same (by symmetry). Hence for simplicity of notation, consider the fractional matching produced by {\em Balance} when the input graph is the monotone graph {\em MonotoneG} (see Section~\ref{sec:MonotoneG}). It is not hard to see that when vertex $u_i$ arrives, {\em Balance} raises the load of each vertex in $\{v_i, \ldots, v_n\}$ by $\frac{1}{n-i+1}$. This can go on until the largest $k$ satisfying $\sum_{i=1}^k \frac{1}{n-i+1} \le 1$. Thereafter, when vertex $u_{k+1}$ arrives, {\em Balance} can raise the load of its $n-k$ neighbors from $\sum_{i=1}^k \frac{1}{n-i+1}$ to~1. Hence altogether the size of the fractional matching is precisely $k + (n-k)(1 - \sum_{i=1}^k \frac{1}{n-i+1})$, for $k$ as above.

The value of $k$ can be determined as follows. It is known that the harmonic number $H_n = \sum_{i=1}^n \frac{1}{i}$ satisfies $H_n = \ln n + \gamma + \frac{1}{2n} + O(\frac{1}{n^2})$, where $\gamma \simeq 0.577$ is the Euler-Mascheroni constant. $k$ is the largest integer such that $H_n - H_{n-k} \le 1$. Defining $\alpha \triangleq \frac{n-k}{n}$, we have that
$$H_n - H_{n-k} = \ln n + \gamma + \frac{1}{2n} + O(\frac{1}{n^2}) - \ln \alpha n - \gamma - \frac{1}{2\alpha n} + O(\frac{1}{n^2}) = \ln \frac{1}{\alpha} - \frac{\frac{1}{\alpha} - 1}{2n} + O(\frac{1}{n^2})$$
Choosing $\alpha = \frac{1}{e}$ (and temporarily ignoring the fact that in this case $k = (1 - \frac{1}{e})n$ is not an integer), we get that $H_n - H_{n-k} = 1  - \frac{e - 1}{2n} + O(\frac{1}{n^2})$. The size of a matching is then
$$(1 - \frac{1}{e})n + \frac{n}{e}(\frac{e - 1}{2n} + O(\frac{1}{n^2})) = (1 - \frac{1}{e})n + \frac{1}{2} + \frac{1}{2e} + O(\frac{1}{n})$$
as desired.

The fact that $k = (1 - \frac{1}{e})n$ above was not an integer requires that we round $k$ down to the nearest integer. The effect of this rounding is bounded by the effect of changing the number of neighbors available to $u_k$ and to $u_{k+1}$ by one (compared to the computation without the rounding). Given that the number of neighbors is roughly $\frac{n}{e}$, the overall effect on the size of the fractional matching is at most $O(\frac{1}{n})$.

We conclude that $\rho_{f,n}(Balance,D_n) = (1 - \frac{1}{e})n + \frac{1}{2} + \frac{1}{2e} + O(\frac{1}{n})$, implying that $\rho_{f,n}(-,D_n) \ge (1 - \frac{1}{e})n + \frac{1}{2} + \frac{1}{2e} + O(\frac{1}{n})$. In remains to show that $\rho_{f,n}(-,D_n) \le (1 - \frac{1}{e})n + \frac{1}{2} + \frac{1}{2e} + O(\frac{1}{n})$. This follows because {\em Balance} is the best possible online algorithm (for fractional bipartite matching) against $D_n$. Let us provide more details.

Given an input graph from the support of $D_n$, we shall say that a vertex $v \in V$ is {\em active} in round $i$ if it is a neighbor of $u_i$. Initially all vertices are active, and after every round, one more vertex (chosen at random among the active vertices) becomes inactive, and remains inactive forever. Let $a(i)$ denote the number of active vertices at the beginning of round $i$, and note that $a(i) = n - i + 1$. Consider an arbitrary online algorithm. Let $L(i)$ denote the average load of the active vertices at the beginning of round $i$. Then in round $i$, the average load first increases by at most $\frac{1}{a(i)}$ (as long as it does not exceed~1) by raising weights of edges, and thereafter, making one vertex inactive keeps the average load unchanged in expectation (over choice of input from $D_n$). Hence in expectation, in every round, the average load does not exceed the value of the average load obtained by {\em Balance}. This means that in every round, in expectation, the amount of unused load of the vertex that became inactive is smallest when the online maximizing algorithm is {\em Balance}. Summing over all rounds and using the linearity of expectation, {\em Balance} suffers the smallest sum of unused load, meaning that it maximizes the final expected sum (over all $V$) of loads. The sum of loads equals the size of the fractional matching.
\end{proof}

We now prove Theorem~\ref{thm:fractionalopt}.

\begin{proof}[Theorem~\ref{thm:fractionalopt}]
Given Theorem~\ref{thm:KVVb1}, it suffices to show that $\rho_{f,n}(Balance,-) = \rho_{f,n}(Balance,D_n)$, namely, that $D_n$ is the worst possible distribution over input graphs for the algorithm {\em Balance}. Moreover, given that {\em Balance} is oblivious to the names of vertices, it suffices to show that {\em MonotoneG} is the worst possible graph for {\em Balance}.

Let $G(U,V;E)\in G_n$ be a graph for which $\rho_{f,n}(Balance,G) = \rho_{f,n}(Balance,-)$.
As {\em Balance} is oblivious to the names of vertices, we may assume that $\{(u_i,v_i) | 1 \le i \le n \}$ is a perfect matching in $G$.

We use the notation $N(w)$ to denote the set of neighbors of a vertex $w$ in the graph $G$.
When running {\em Balance} on $G$, we use the notation $m(i,j)$ to denote the weight that the fractional matching places on edge $(u_i,v_j)$ (and $m(i,j) = 0$ if $(u_i,v_j) \not\in E$), and $m_i(j)$ to denote $\sum_{1 \le \ell \le i} m(u_{\ell},v_j)$. Clearly, $m_i(j)$ is non-decreasing in $i$.  The size of the final fractional matching is $m = \sum_{j=1}^n m_n(j)$. When referring to a graph $G'$, we shall use the notation $N'$ and $m'$ instead of $N$ and $m$.

An edge $(u_i,v_j)$ with $j < i$ is referred to as a {\em backward} edge.

\begin{proposition}
Without loss of generality, we may assume that $G$ has no backward edges. Hence $m_i(j) = m_j(j)$ for all $i > j$.
\end{proposition}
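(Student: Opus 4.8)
The plan is to show that deleting \emph{every} backward edge from a worst-case graph cannot increase the size of the fractional matching that \emph{Balance} produces; since a minimizer with no backward edges then exists, the proposition follows, and the identity $m_i(j)=m_j(j)$ for $i>j$ is then immediate because no weight sits on edges $(u_\ell,v_j)$ with $\ell>j$. Concretely, let $G$ attain $\rho_{f,n}(Balance,-)$, with $\{(u_\ell,v_\ell)\}$ a perfect matching, and let $\hat G$ be obtained by removing all edges $(u_i,v_j)$ with $j<i$. The matching edges $(u_\ell,v_\ell)$ survive (they satisfy $j=\ell$, not $j<\ell$), so $\hat G\in G_n$; combined with minimality of $G$, it then suffices to prove $\rho_{f,n}(Balance,\hat G)\le \rho_{f,n}(Balance,G)$, after which $\hat G$ is itself a worst-case graph with no backward edges.

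First I would run \emph{Balance} on $G$ and on $\hat G$ in lockstep (the arrival order $u_1,\dots,u_n$ is common), writing $\ell_i(\cdot)$ and $\hat\ell_i(\cdot)$ for the load vectors after round $i$. The heart of the argument is a domination invariant between the two runs. Per-vertex domination $\hat\ell_i(v_j)\le\ell_i(v_j)$ is \emph{false} in general — dropping a backward edge from $u_i$ can force $u_i$ to pour more of its budget onto an in-range neighbor $v_j$ ($j\ge i$), pushing $\hat\ell_i(v_j)$ above $\ell_i(v_j)$ — so the invariant I would carry is a prefix-sum (majorization-type) statement ordered by the matching label: $\sum_{j\le k}\hat\ell_i(v_j)\le \sum_{j\le k}\ell_i(v_j)$ for every $k$ and every $i$. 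Taking $k=n$ and $i=n$ gives that the total final load in the $\hat G$-run is at most that in the $G$-run, i.e. $\rho_{f,n}(Balance,\hat G)\le \rho_{f,n}(Balance,G)$, which is exactly what we need.

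For the induction step in round $i$: on the $\hat G$ side $u_i$ only raises loads of neighbors $v_j$ with $j\ge i$, so every prefix sum with $k<i$ is unchanged in $\hat G$ while it can only grow in $G$ — those inequalities are preserved for free. For $k\ge i$ one must compare the weight that $u_i$'s water-filling places inside $\{v_1,\dots,v_k\}$ in the two runs; here one uses that $u_i$'s total budget $\min\big(1,\;|N(u_i)|-\sum_{v\in N(u_i)}\ell(v)\big)$ does not grow when a neighbor with load $<1$ is deleted (the quantity $|N(u_i)|-\sum\ell$ changes by $-(1-\ell(v))\le 0$ per such deletion), together with the fact that water-filling equalizes rather than overshoots, so the extra weight appearing inside $\{v_1,\dots,v_k\}$ in $\hat G$ relative to $G$ is absorbed by the slack already present in the invariant.

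I expect this last comparison to be the main obstacle. Because the budgets, thresholds and per-vertex loads of the two executions are coupled in a way that resists a term-by-term comparison, the real work is pinning down the exact prefix-sum (or equivalent majorization) statement that the \emph{Balance} update preserves and checking the boundary cases of the water-filling threshold in both runs; one may well need to strengthen the invariant or to process the backward edges one at a time in a carefully chosen order rather than all at once. Everything else — that $\hat G$ keeps the perfect matching, that the absence of backward edges yields $m_i(j)=m_j(j)$ for $i>j$, and that $\hat G$ is then also a worst-case graph — is routine.
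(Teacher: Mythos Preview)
Your plan has a genuine gap, and you flagged it yourself: the prefix-sum invariant for the two runs after removing \emph{all} backward edges at once is stated but never established, and your sketch of the induction step for $k\ge i$ already breaks. You write that $u_i$'s budget ``does not grow when a neighbor with load $<1$ is deleted,'' but this compares $\hat G$ to $G$ with the \emph{same} pre-round loads; once earlier backward edges have also been removed, the loads entering round $i$ differ in the two runs, so that budget comparison no longer applies. You are then left with a coupled comparison of two water-fillings with different neighbor sets, different starting loads, and different budgets --- exactly the situation you call ``the main obstacle.'' As written, the proposal is a plan with its central step missing.

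The paper avoids the majorization route entirely by a simple ordering trick you hinted at in your last paragraph: remove backward edges from one $u_i$ at a time, choosing $i$ to be the \emph{largest} index that still has backward edges. The point of this choice is that rounds $1,\dots,i-1$ are then literally identical in $G$ and $G'$, so one compares a single water-filling at round $i$ with the same incoming loads but fewer neighbors. A standard monotonicity of water-filling (removing a neighbor can only raise the threshold, hence the post-round load of each surviving neighbor) gives \emph{pointwise} domination $m'_i(j)\ge m_i(j)$ for every $j\ge i$, not merely a prefix-sum inequality; and since $u_{i+1},\dots,u_n$ have no backward edges (by maximality of $i$) and identical neighborhoods in $G$ and $G'$, this pointwise inequality propagates through all later rounds, forcing each later $u_\ell$ to distribute no more in $G'$ than in $G$. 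Iterating strips all backward edges. The key idea you are missing is precisely this ``largest $i$ first'' order, which decouples the rounds and replaces your majorization with a clean pointwise comparison.
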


\begin{proof}
Suppose otherwise, and let $i$ be largest so that $u_i$ has backward edges. Modify $G$ by removing all backward edges incident with $u_i$, thus obtaining a graph $G'$. Compare the performance of {\em Balance} against the two graphs, $G$ and $G'$. On vertices $u_1, \ldots, u_{i-1}$, both graphs produce the same fractional matching. The extent to which $u_i$ is matched is at least as large in $G$ as it is in $G'$ (because also backward edges may participate in the fractional matching). Moreover, for every vertex $v_j$ for $i < j \le n$, it holds that $m'_i(j) \ge m_i(j)$. It follows that for every vertex $u_{\ell}$ for $\ell > i$, its marginal contribution to the fractional matching in $G$ is at least as large as its marginal contribution  in $G'$. Hence the fractional matching produced by {\em Balance} for $G'$ is not larger than that produced for $G$. Repeating the above argument, all backward edges can be eliminated from $G$ without increasing the size of the fractional matching.
\end{proof}

\begin{lemma}
\label{lem:mMonotone}
Without loss of generality we may assume that:

\begin{enumerate}
\item $m_i(i) \le m_j(j)$ (or equivalently, $m_n(i) \le m_n(j)$) for all $i < j$.
\item $m_i(i) \ge m_i(j)$ for all $i$ and $j$.
\end{enumerate}
\end{lemma}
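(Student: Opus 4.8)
The plan is to establish each of the two properties by a separate "exchange argument" in the spirit of the preceding proposition: start from an optimal graph $G$ (one achieving $\rho_{f,n}(Balance,-)$, with no backward edges), assume a violation of the claimed property, and modify $G$ into $G'$ in a way that does not increase the size of the fractional matching produced by \emph{Balance}, while reducing the number of violations. Since the total number of candidate violating pairs is finite, iterating terminates with a graph satisfying both properties, which is then (still) optimal.

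For part~2, $m_i(i) \ge m_i(j)$ for all $i,j$: note that by the no-backward-edges convention and monotonicity of $m_i(j)$ in $i$, the only way part~2 can fail is for $j > i$ with $m_i(j) > m_i(i)$. I would track \emph{Balance}'s behavior round by round. The key structural fact about \emph{Balance} is that within a single round $i$ it equalizes loads of the neighbors of $u_i$ up to a common threshold $t_i$; so after round $i$, among neighbors of $u_i$, the ones that received new weight all have load exactly $t_i$, and the ones that did not already had load $\ge t_i$. If $v_j$ (with $j>i$) currently has load strictly larger than $v_i$ at the start of round $i$, I would argue one can \emph{swap the roles} of $v_i$ and $v_j$ in the matching — i.e. relabel so that the perfect-matching partner of $u_i$ is the less-loaded vertex — which is legitimate because \emph{Balance} is oblivious to vertex names and the perfect matching was only fixed by a relabeling. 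One has to check that this swap is consistent with the no-backward-edges convention and with the set of remaining neighbors; the cleaner route may be to push the whole argument into a statement about which vertex of $V$ is "chosen" to be $u_i$'s matched partner, observing that the optimal choice for the adversary is always to assign $u_i$ the currently \emph{least}-loaded among its available neighbors, since that vertex's eventual unused load contributes least. That is exactly the content of part~2 once combined with part~1.

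For part~1, $m_n(i) \le m_n(j)$ for $i < j$ (final loads are nondecreasing in index): here I would use that $N(u_i) \subseteq \{v_i,\dots,v_n\}$ shrinks as $i$ grows, so later-arriving vertices see a smaller active set and hence, by the averaging argument already used in the proof of Theorem~\ref{thm:KVVb1}, the threshold $t_i$ is nondecreasing in $i$. Combined with the fact that $v_j$ for $j>i$ only starts receiving weight at round $j$ or later (no backward edges), each $v_j$'s final load is a sum of "top-ups" to thresholds $t_j \le t_{j+1} \le \cdots$, while $v_i$'s is a sum of top-ups starting from the smaller threshold $t_i$; a careful bookkeeping gives $m_n(i) \le m_n(j)$. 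Where this is not automatic, I would again invoke an exchange: if $m_n(i) > m_n(j)$ for some $i<j$, relabel $v_i \leftrightarrow v_j$ (and their matching partners accordingly), arguing the swap does not increase the matching size because it only changes which vertex is "wasted" earlier versus later.

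The main obstacle I anticipate is making the exchange arguments fully rigorous in the presence of \emph{Balance}'s threshold dynamics: a local swap of two vertices changes not only those two vertices' loads but, through the threshold computation in every subsequent round, the loads of \emph{all} later vertices, so one cannot argue round-by-round in complete isolation. The way I would handle this is to avoid swapping inside a fixed run of \emph{Balance} and instead phrase everything in terms of the adversary's freedom: \emph{Balance} is name-oblivious, so "the worst graph" can be taken to be one in which, at each round, $u_i$'s neighbor set and its designated matched partner are chosen to minimize the final matching; I would then show the minimizing choice automatically satisfies (1) and (2). This converts the two claims into monotonicity statements about the greedy adversary's optimal play, which should follow from the nondecreasing-threshold observation plus the "least-loaded partner is wasted least" principle, sidestepping the global side-effects of an ad hoc swap.
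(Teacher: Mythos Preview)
Your high-level strategy---exchange arguments to normalize the worst-case $G$---matches the paper, but the execution has real gaps and at least one concrete error.

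First, a factual mistake in your item~1 argument: you write that ``$v_j$ for $j>i$ only starts receiving weight at round $j$ or later (no backward edges).'' This is backwards. A backward edge is $(u_\ell,v_k)$ with $k<\ell$; ruling those out still allows $v_j$ to be a neighbor of any $u_\ell$ with $\ell\le j$, so $v_j$ may receive weight from round~1 onward. The bookkeeping that follows, in which $v_j$'s load is a sum of top-ups beginning only at threshold $t_j$, collapses once this is corrected. Relatedly, your claim that the thresholds $t_i$ are nondecreasing in $i$ is true for \emph{MonotoneG} but is essentially the content of the lemma for a general $G$ without backward edges; you cannot import it from Theorem~\ref{thm:KVVb1}. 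Your fallback (``relabel $v_i\leftrightarrow v_j$ and argue the swap does not increase the matching'') is asserted, not proved, and you yourself note the obstacle: a swap perturbs every subsequent threshold.

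The paper's route is structurally different and sidesteps exactly this obstacle. Before attacking item~1 it first normalizes so that $m_i(i)=m_i(j)$ for every $v_j\in N(u_i)$: edges with $m(i,j)=0$ are deleted (harmless for \emph{Balance}), and if some neighbor still has strictly smaller load, one swaps the \emph{past} incidences of $v_i$ and that neighbor among $u_1,\dots,u_{i-1}$, which can only shrink the matching. With this in hand the key insight is: if $m_i(i)>m_{i+1}(i+1)$ then $u_i$ and $u_{i+1}$ can have \emph{no common neighbor}, since a common $v_\ell$ would force $m_{i+1}(i+1)=m_{i+1}(\ell)\ge m_i(\ell)=m_i(i)$. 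Disjoint neighborhoods make swapping the arrival order of $u_i$ and $u_{i+1}$ completely inert---no downstream thresholds change at all---so item~1 follows by repeated adjacent swaps. Item~2 is then a short consequence of item~1 plus no-backward-edges, not a separate exchange argument. Your proposal is missing both the equal-loads normalization and the disjoint-neighborhood observation, which together are what make the swap rigorous.
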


\begin{proof}
We first present some useful observations. For $1 \le i < n$, consider the set $N(u_i)$ of neighbours  of $u_i$ in $G$ (and recall that $v_i \in N(u_i)$, and that there are no backward edges). Then without loss of generality we may assume that $m_i(i) \ge m_i(j)$ for all $v_j \in N(u_i)$.  This is because if there is some vertex $v_j \in N(u_i)$ with $m_i(j) > m_i(i)$, then
it must hold (by the properties of {\em Balance}) that $m(i,j)=0$.  Hence
the run of {\em Balance} would not change if the edge $(u_i,v_j)$ is removed from $G$ (and then $v_j \not\in N(u_i)$).

Moreover, we may assume that $m_i(i) = m_i(j)$ for all $v_j \in N(u_i)$. Suppose otherwise. Then for $v_j \in N(u_i)$ with smallest $m_i(j)$, modify $G$ to a graph $G'$ as follows. For all $\ell < i$, make $u_{\ell}$ a neighbor of $v_i$ iff it was a neighbor of $v_j$, and make $u_{\ell}$ a neighbor of $v_j$ iff it was a neighbor of $v_i$.
The final size of the fractional matching in $G'$ (which is $\sum_{j=1}^n m'_n(j)$) cannot be larger than in $G$. This is because $m'_i(i) < m_i(i)$, $m'_i(j) > m_i(j)$ and for $\ell \not= j$ satisfying $\ell > i$ it holds that $m'_i(\ell) = m_i(\ell)$. Moreover, as $m_i(i) < m_i(j) \le 1$, $u_i$ is fully matched in $G$ and hence also in $G'$, so the total size of fractional matching after step $i$ is the same in both graphs. Thereafter, the marginal increase of the fractional matching at each step cannot be larger in $G'$ than it is in $G$.

By the same arguments as above we may assume that $m_{i+1}(i+1) = m_{i+i}(j)$ for all $v_j \in N(u_{i+1})$.

Suppose now that item~1 fails to hold. Then for some $1 \le i \le n-1$ it holds that $m_i(i) > m_{i+1}(i+1)$.
Vertices $u_i$ and $u_{i+1}$ cannot have a common neighbor because if they do (say, $v_{\ell}$) it holds that  $m_{i+1}(i+1) = m_{i+1}(\ell) \ge m_i(\ell) = m_i(i_i)$. Hence we may exchange the order of $u_i$ and $u_{i+1}$ (and likewise $v_i$ and $v_{i+1}$) without affecting the size of the fractional matching produced by {\em Balance}.

Repeating the above argument whenever needed we prove item~1 of the lemma.

For $j < i$ item~2 holds because $m_i(j) = m_j(j) \le m_i(i)$ (the last inequality follows from item~1). For $j > i$ item 2 holds because at the first point in time $\ell \le i$ in which $m_{\ell}(j) = m_i(j)$ it must be that $m_{\ell}(j) = m_\ell(\ell)$, and item~1 implies that $m_{\ell}(\ell) \le m_i(i)$.
\end{proof}

It is useful to note that Lemma~\ref{lem:mMonotone} implies that there is some round number $t$ such that for all $\ell \ge t$ vertex $v_{\ell}$ is fully matched (namely, $m_n(\ell) = 1$), and for every $\ell < t$ vertex $v_{\ell}$ is not fully matched (namely, $m_n(\ell) < 1$). As to vertices in $u$, for $\ell < t$ vertex $u_{\ell}$ is fully matched, for $\ell > t$ vertex $u_{\ell}$ contributes nothing to the fractional matching, and $u_t$ is either partly matched or fully matched. Recalling that $m$ denotes the size of the final fractional matching, we thus have (for $t$ as above):

\begin{equation}
\label{eq:m}
m = t- 1 + \sum_{j \ge t} m(t,j)
\end{equation}

At every step $i$, the contribution of vertex $v_i$ towards the fractional matching is finalized at that step, namely, $m_n(i) = m_i(i)$.
Lemma~\ref{lem:mMonotone} implies that for the worst graph $G$, this vertex $v_i$ is the one with largest $m_i$ value at this given step. Hence $m_i(i) =  \max_{j \ge i}[m_i(j)]$ and we have:
$$m = \sum_{i=1}^n m_n(i) = \sum_{i=1}^n m_i(i) = \sum_{i=1}^n \max_{j \ge i}[m_i(j)].$$
At this point it is intuitively clear why {\em MonotoneG} is the graph in $G_n$ on which {\em Balance} produces the smallest fractional matching. This is because with {\em MonotoneG}, at each step $i$ the fractional matching gets credited a value $m_i(i)$ that is the average of the values $m_i(j)$ for $j \ge i$, whereas for $G$ its gets credited the maximum of these values. Below we make this argument rigorous.

Consider an alternative {\em averaging process} replacing algorithm {\em Balance}. It uses the same fractional matching as in {\em Balance} and the same $m(i,j)$ values, but maintains values $m'_i(i)$ that may differ from $m_i(i)$. 
At round 1, instead of being credited the maximum $m_1(1) = \max_{j \ge 1}[m_1(j)]$, the process is credited only the average $m'_i(1) = \frac{1}{n}\sum_{j=1}^n m_i(j)$. The remaining $\max_{j \ge 1}[m_1(j)] - \frac{1}{n}\sum_{j=1}^n m_1(j)$ is referred to as the {\em slackness} $s(1)$. More generally, at every round $i > 1$, instead of being credited by $\max_{j \ge i}[m_i(j)]$ at step~$i$, the averaging process gets credit from two sources. One part of the credit is the average $\frac{1}{n-i+1}\sum_{j=i}^n m_i(j)$,  where $s(i) = \max_{j \ge i}[m_i(j)] - \frac{1}{n-i+1}\sum_{j=i}^n m_i(j)$ is the slackness generated at round $i$. In addition, the process gets credit also for the slackness accumulated in previous rounds $\ell < i$, in such a way that each slackness variable $s(\ell)$ gets distributed evenly among the $n - \ell$ rounds that follow it. Hence we set
\begin{equation}
\label{eq:defm'}
m'_i(i) = \frac{1}{n-i+1}\sum_{j=i}^n m_i(j) + \sum_{\ell=1}^{i-1} \frac{s(\ell)}{n-\ell}.
\end{equation}
The averaging process continues until the first round $t'$ at which $m'_{t'}(t') \ge 1$, at which point $m'_j(j)$ is set to~1 for all $j \ge t'$, and the process ends. The size of the fractional matching associated with the averaging process is $m' = \sum_{i=1}^n m'_i(i)$. Computing $m'$ using the contributions of the vertices from $U$, for $t'$ as above, we get that:

\begin{equation}
\label{eq:m'}
m' = t' - 1 + \sum_{j \ge t'} m(t',j)
\end{equation}


\begin{proposition}
For the graph $G$, the size of the fractional matching produced by the averaging process is no larger than that produced by {\em Balance}. Namely, $m' \le m$.
\end{proposition}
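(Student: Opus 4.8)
The plan is to compare the round $t$ at which {\em Balance} saturates (in the sense of~(\ref{eq:m})) with the round $t'$ at which the averaging process stops, to show that $t' \le t$, and then to read off $m$ and $m'$ directly from~(\ref{eq:m}) and~(\ref{eq:m'}); note that both quantities use the \emph{same} weights $m(i,j)$, so they are completely determined once the stopping rounds are known. Intuitively, although at each individual round the averaging process is credited less than {\em Balance} (an average rather than a maximum), the slackness it carries forward pushes $m'_t(t)$ up to at least $1$, so it cannot outlast {\em Balance}, and a process that stops no later cannot end up with a larger matching.

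The first step is to record that every slackness is nonnegative, $s(\ell) = \max_{j \ge \ell}[m_\ell(j)] - \frac{1}{n-\ell+1}\sum_{j=\ell}^n m_\ell(j) \ge 0$, since a maximum dominates an average. The second step invokes the structural consequence of Lemma~\ref{lem:mMonotone} stated just before~(\ref{eq:m}): the saturated vertices of $V$ are exactly $v_t,\dots,v_n$, and since no $u_\ell$ with $\ell > t$ adds weight, their loads are already finalized by round $t$, so $m_t(j) = 1$ for all $j \ge t$. Substituting $i = t$ in the definition~(\ref{eq:defm'}), the average term equals $\frac{1}{n-t+1}\sum_{j \ge t} m_t(j) = 1$, whence
$$m'_t(t) = 1 + \sum_{\ell=1}^{t-1} \frac{s(\ell)}{n-\ell} \ge 1.$$
Since $t'$ is by definition the first round at which $m'_i(i) \ge 1$, this gives $t' \le t$.

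The third step finishes by cases. If $t' = t$, then~(\ref{eq:m}) and~(\ref{eq:m'}) are the same expression $t-1+\sum_{j\ge t}m(t,j)$, so $m' = m$. If $t' < t$, then by~(\ref{eq:m'}) together with the fact that $\sum_{j\ge t'}m(t',j) \le 1$ (a fractional-matching constraint on $u_{t'}$), we get $m' \le t' \le t-1$, while~(\ref{eq:m}) gives $m \ge t-1$; hence $m' \le m$ in this case as well.

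The only step that is not pure bookkeeping is the inequality $m'_t(t) \ge 1$, and essentially all of its content is the claim $m_t(j) = 1$ for $j \ge t$ — that is, correctly extracting from Lemma~\ref{lem:mMonotone} that $v_t,\dots,v_n$ are precisely the saturated $V$-vertices and that their loads do not change after round $t$. Once that is in hand the rest is immediate from~(\ref{eq:m}), (\ref{eq:m'}), (\ref{eq:defm'}) and the sign of the slacknesses, so I expect this structural input to be the main (and essentially only) obstacle.
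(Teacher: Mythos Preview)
Your proof is correct and follows essentially the same route as the paper: show $m'_t(t)\ge 1$ from $m_t(j)=1$ for $j\ge t$ and nonnegativity of the slacknesses, deduce $t'\le t$, and then compare~(\ref{eq:m}) with~(\ref{eq:m'}). You supply a bit more detail in the $t'<t$ case (bounding $m'\le t'\le t-1\le m$ via the degree constraint on $u_{t'}$), which the paper leaves implicit.
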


\begin{proof}
Compare Equations (\ref{eq:m}) and (\ref{eq:m'}). If $t' = t$ then $m' = m$, and if $t' < t$ then $m' < m$. Hence it suffices to show that the assumption $t' \ge t$ implies that $t' = t$. This follows because $m_t(j) = 1$ for all $j \le t$ (as noted above), and so:
$$m'_t(t) = \frac{1}{n-t+1}\sum_{j=t}^n m_t(j) + \sum_{\ell=1}^{t-1} \frac{s(\ell)}{n-\ell} = 1 + \sum_{\ell=1}^{t-1} \frac{s(\ell)}{n-\ell} \ge 1$$
where the last inequality holds because all slackness variables $s(\ell)$ are non-negative.
\end{proof}

\begin{proposition}
For {\em MonotoneG}, running the averaging process and running {\em Balance} are exactly the same process, giving $m'(MonotoneG) = m(MonotoneG)$.
\end{proposition}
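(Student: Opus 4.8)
The plan is to exploit the complete symmetry of {\em MonotoneG} among its high-indexed $V$-vertices. First I would prove, by induction on the round $i$, that when {\em Balance} is run on {\em MonotoneG} the vertices $v_i, v_{i+1}, \ldots, v_n$ all carry exactly the same load at the beginning of round $i$; equivalently, $m_{i-1}(j)$ is independent of $j$ for $j \ge i$. The base case $i = 1$ is trivial, since all loads start at $0$. For the inductive step, observe that in {\em MonotoneG} the neighbour set of $u_i$ is exactly $\{v_i, \ldots, v_n\}$ (there are no backward edges), and these are precisely the vertices that, by the inductive hypothesis, share a common load; since {\em Balance} keeps loads as balanced as possible, it raises all of them by the same amount (by $\frac{1}{n-i+1}$, or, in the single round in which the weight budget is capped, up to load $1$), so $v_{i+1}, \ldots, v_n$ again share a common load at the start of round $i+1$. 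This is exactly the behaviour already recorded in the proof of Theorem~\ref{thm:KVVb1}.

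Given this, at every round $i$ the numbers $\{m_i(j) : j \ge i\}$ are all equal, so their maximum coincides with their average and the slackness generated at round $i$ is $s(i) = \max_{j \ge i}[m_i(j)] - \frac{1}{n-i+1}\sum_{j=i}^n m_i(j) = 0$. Substituting $s(\ell) = 0$ for all $\ell$ into Equation~(\ref{eq:defm'}) gives $m'_i(i) = \frac{1}{n-i+1}\sum_{j=i}^n m_i(j) = m_i(i)$, so the averaging process credits exactly the same value as {\em Balance} in every round; in particular its stopping rule (``the first round at which the credited value reaches $1$'') fires in the same round, i.e.\ $t' = t$. Comparing Equations~(\ref{eq:m}) and~(\ref{eq:m'}) then yields $m'(MonotoneG) = m(MonotoneG)$.

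I do not expect a genuine obstacle here: the only point needing a little care is the single round in which {\em Balance} does not raise every active load by $\frac{1}{n-i+1}$ but instead caps loads at $1$ — one must check that the relevant loads remain equal (they do, because they were equal beforehand) and that the averaging process terminates in the same round. Both facts are immediate consequences of the identity $m'_i(i) = m_i(i)$ established above, so the proof reduces to the symmetry induction plus routine bookkeeping, and can lean directly on the per-round analysis already carried out in the proof of Theorem~\ref{thm:KVVb1}.
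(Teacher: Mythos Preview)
Your proposal is correct and takes essentially the same approach as the paper: the key observation in both is that on {\em MonotoneG} the values $m_i(j)$ for $j \ge i$ are all equal, so the maximum and the average coincide and every slackness term vanishes. You simply supply more detail than the paper does --- the explicit induction establishing the equal-loads property and the verification that $t' = t$ --- whereas the paper states the equal-loads fact in one line and leaves the rest implicit.
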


\begin{proof}
This is because when running {\em Balance} on {\em MonotoneG}, at every round $i$ we have that $m_i(i) = m_i(j)$ for all $j > i$. Hence there is no difference between the average and the maximum of the $m_i(j)$ for $j \ge i$.
\end{proof}

\begin{proposition}
The size of the fractional matching produced by the averaging process for graph $G$ is not smaller than the size it produces for {\em MonotoneG}. Namely, $m'(G) \ge m'(MonotoneG)$.
\end{proposition}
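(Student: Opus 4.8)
The plan is to collapse the entire averaging process, for any graph satisfying the normalizations of Lemma~\ref{lem:mMonotone}, into a single scalar sequence, and then compare that sequence for $G$ and for \emph{MonotoneG}. Write $w_\ell = \sum_j m(\ell,j)$ for the total weight \emph{Balance} places on edges incident with $u_\ell$ (equal to $\sum_{j\ge\ell} m(\ell,j)$, since there are no backward edges), and $A(i)=\sum_{j\ge i} m_i(j)$ for the load carried by $v_i,\dots,v_n$ after round $i$. Then $A(i+1)=A(i)-m_i(i)+w_{i+1}$, and $m_i(i)=\max_{j\ge i}m_i(j)=\frac{A(i)}{n-i+1}+s(i)$. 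Substituting these into~(\ref{eq:defm'}) and telescoping, the accumulated slackness $\sum_{\ell<i}\frac{s(\ell)}{n-\ell}$ cancels exactly against the amount by which $\frac{A(i)}{n-i+1}$ falls short of being additive, and one is left with
\begin{equation*}
m'_i(i)=\phi(i):=\sum_{\ell=1}^{i}\frac{w_\ell}{n-\ell+1},\qquad\text{so}\qquad \phi(i+1)=\phi(i)+\frac{w_{i+1}}{n-i}.
\end{equation*}
Thus $\phi$ is non-decreasing, each $w_\ell\le 1$, the averaging process halts at $t'=\min\{i:\phi(i)\ge 1\}$ (well defined since $m'_t(t)\ge 1$ as noted in the previous proposition, so $t'\le t$), and by~(\ref{eq:m'}) we have $m'=t'-1+w_{t'}$.

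The same closed form applies to \emph{MonotoneG}, with weights $w^M_\ell$ and halting time $t_M=t'(MonotoneG)$. Here I would invoke the explicit description of \emph{Balance} on \emph{MonotoneG} from the proof of Theorem~\ref{thm:KVVb1}: vertex $u_\ell$ raises each of $v_\ell,\dots,v_n$ by $\frac{1}{n-\ell+1}$ as long as $\sum_{i\le\ell}\frac{1}{n-i+1}\le 1$, so $w^M_\ell=1$ for every $\ell<t_M$, while $w^M_{t_M}\le 1$ always, and $\phi_M(t_M-1)<1$. Consequently $m'(MonotoneG)=t_M-1+w^M_{t_M}$ with $w^M_{t_M}\le 1$.

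Next I would compare the halting times. Since $0\le w^G_\ell\le 1=w^M_\ell$ for every $\ell<t_M$, we get $\phi_G(i)\le\phi_M(i)$ for all $i\le t_M-1$; in particular $\phi_G(t_M-1)\le\phi_M(t_M-1)<1$, hence $t'(G)\ge t_M$. Now split into cases. If $t'(G)\ge t_M+1$, then $m'(G)=t'(G)-1+w^G_{t'(G)}\ge t'(G)-1\ge t_M\ge t_M-1+w^M_{t_M}=m'(MonotoneG)$, using $w^M_{t_M}\le 1$. If $t'(G)=t_M$, then $\phi_G(t_M)\ge 1>\phi_G(t_M-1)$, so the jump formula gives $\frac{w^G_{t_M}}{n-t_M+1}\ge 1-\phi_G(t_M-1)\ge 1-\phi_M(t_M-1)$, i.e.\ $w^G_{t_M}\ge(n-t_M+1)(1-\phi_M(t_M-1))=w^M_{t_M}$, whence $m'(G)=t_M-1+w^G_{t_M}\ge t_M-1+w^M_{t_M}=m'(MonotoneG)$. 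Either way $m'(G)\ge m'(MonotoneG)$.

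The main obstacle is the first step: verifying that the per-round credit of the averaging process really collapses to the clean partial sum $\phi(i)=\sum_{\ell\le i}\frac{w_\ell}{n-\ell+1}$. This is precisely where the particular slackness-redistribution rule in~(\ref{eq:defm'}) is used essentially — one has to check that spreading $s(\ell)$ over the remaining $n-\ell$ rounds exactly compensates, round by round, for replacing the maximum of the active loads by their average. Once that identity is in hand, everything else is elementary: monotonicity of $\phi$, the bound $w_\ell\le 1$, the explicit \emph{Balance}/\emph{MonotoneG} dynamics already computed in Section~\ref{sec:KVVb}, and the short case analysis above.
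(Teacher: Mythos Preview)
Your proof is correct and follows essentially the same approach as the paper: both collapse the averaging-process credits into a closed-form partial sum --- your $\phi(i)=\sum_{\ell\le i}\frac{w_\ell}{n-\ell+1}$ is exactly the paper's Equation~(\ref{eq:3}) once one substitutes $w_\ell=1$, and the telescoping verification you flag as the ``main obstacle'' is precisely the inductive step the paper carries out.

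The only substantive difference is in the endgame. The paper uses (from the preceding proposition) that $t'\le t$, hence $w^G_\ell=1$ for every $\ell<t'(G)$; this makes $\phi_G(i)$ and $\phi_M(i)$ literally equal for $i<t'$ and forces $t'(G)=t_M$ outright. You instead use only the weaker $w^G_\ell\le 1$, obtain $t'(G)\ge t_M$, and close with a case analysis. Your case $t'(G)>t_M$ is in fact vacuous (one can check $t'(G)=t_M$ always holds), but the argument you give for it is sound, and your handling of the case $t'(G)=t_M$ --- recovering $w^M_{t_M}=(n-t_M+1)(1-\phi_M(t_M-1))$ from the explicit \emph{Balance} dynamics and comparing --- is a clean way to finish.
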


\begin{proof}
Running the averaging process on graph $G$, we claim that for every round $i < t'$ we have that:

\begin{equation}
\label{eq:3}
m'_i(i) = \sum_{k\le i} \frac{1}{n-k+1}
\end{equation}

The equality can be proved by induction. For $i=1$ both sides of the equality are~$\frac{1}{n}$. For the inductive step, recalling Equation~\ref{eq:defm'} one can infer that
$$m'_{i+1}(i+1) = \frac{1}{n-i}\left((n - i + 1)m'_i(i) - m'_i(i) + 1\right)$$
where the $+1$ term is because $i < t'$. Likewise, the right hand side develops in the same way:
$$\sum_{k\le i+1} \frac{1}{n-k+1} = \frac{1}{n-i}\left((n - i + 1)\sum_{k\le i} \frac{1}{n-k+1} - \sum_{k\le i} \frac{1}{n-k+1} + 1\right)$$

The left hand side of Equation~(\ref{eq:3}) concerns graph $G$. Observe that $m'_i(i)$ for {\em MonotoneG} exactly equals the right hand side of Equation~(\ref{eq:3}). It follows that the averaging process ends at the same step $t'$ both on the graph $G$ and on {\em MonotoneG}, and up to step $t'$ the accumulated fractional matching $m'$ is identical. For rounds $j \ge t'$ we have that $m'_{j}(j) = 1$ for $G$ and it cannot be larger than~1 for {\em MonotoneG}, proving the proposition.
\end{proof}

Combining the three propositions above we get that:

$$m(G) \ge m'(G) \ge m'(MonotoneG) = m(MonotoneG)$$

This completes the proof of Theorem~\ref{thm:fractionalopt}. \end{proof}

\section{Online integral matching}
\label{sec:exact}

The first part of Theorem~\ref{thm:exact} is restated in the following theorem (recall the definition of the monotone graph {\em MonotoneG} in Section~\ref{sec:MonotoneG}).

\begin{theorem}
\label{thm:exact1}
Let the function $a(n)$ be such that $\rho_n(Ranking,MonotoneG) = \frac{a(n)}{n!}$ for all $n$. Then $a(n) = (n+1)! - d(n+1) - d(n)$, where $d(n)$ is the number of derangements (permutations with no fixed points) on the numbers $[1,n]$.
\end{theorem}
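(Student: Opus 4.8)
The plan is to recast $\rho_n(\mathit{Ranking},\mathit{MonotoneG})\cdot n!=a(n)$ as a sum, over all permutations, of a simple statistic of an auxiliary process, derive a recursion for that sum, and solve it using the standard derangement recursion.

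\emph{Step 1: a combinatorial model.} Running \emph{Ranking} on \emph{MonotoneG} with a random permutation $\pi$ of $V$ is equivalent to the following. List the vertices of $V$ in increasing order of their $\pi$-rank, obtaining a uniformly random permutation $j_1,\dots,j_n$ of $[n]$ (the indices of the vertices of $V$ in that order), and process them one at a time, matching $v_{j_r}$ to the available vertex of least index among its neighbours $u_1,\dots,u_{j_r}$. That reversing the processing order gives the same matching holds because all vertices of $V$ share the same preference order over $U$ (lower index is better), so the relevant stable matching is unique. Since the vertex chosen at each step is the smallest-index available one, the set of matched $u$-vertices is always a prefix $u_1,\dots,u_k$; hence the matching size equals $m_n$, where $m_0=0$ and $m_r=m_{r-1}+\mathbf 1\{j_r>m_{r-1}\}$. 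Thus $a(n)=\sum_{\pi\in S_n}m_n(\pi)$, where $S_n$ is the set of permutations of $[n]$ and $m_n(\pi)$ is obtained by feeding $\pi$, read as a sequence, to this recursion.

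\emph{Step 2: a recursion for $a(n)$.} Call step $s$ \emph{tight} if $j_s=m_{s-1}+1$; equivalently, the match produced at a tight step is a diagonal one, $u_c\mapsto v_c$. For $\sigma\in S_{n-1}$ let $T(\sigma)$ be the position of its last tight step, and $T(\sigma)=0$ if there is none. Every $\pi\in S_n$ is obtained uniquely by inserting the value $n$ at some position $p$ (the rank of $v_n$) into some $\sigma\in S_{n-1}$; write $\pi=\sigma^{(p)}$. The inserted value $n$ always exceeds the current counter, so it creates a match and raises the counter by $1$; tracking how this surplus propagates through the subsequent requests, one checks that it is absorbed exactly at the first tight step of $\sigma$ at position $\ge p$ (and is never absorbed if there is none). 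Hence $m_n(\sigma^{(p)})=m_{n-1}(\sigma)+\mathbf 1\{p>T(\sigma)\}$, and summing over $\sigma$ and $p$,
$$a(n)=n\,a(n-1)+n!-B(n-1),\qquad B(m):=\sum_{\sigma\in S_m}T(\sigma).$$

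\emph{Step 3: the identity $B(m)=d(m+1)$, and conclusion.} This is the heart of the matter. I would attempt the same insertion analysis, now keeping track of $T$ rather than of $m_n$: the position of the last tight step of $\sigma^{(p)}$ should be expressible in terms of $T(\sigma)$, the position of the last ``excess'' match of $\sigma$ (one with $j_s\ge m_{s-1}+2$), and the boundary case $\sigma=\mathrm{id}$, and summing over $p$ should collapse to a recursion of the same form as $d$, namely $B(m)=m\bigl(B(m-1)+B(m-2)\bigr)$; with $B(0)=0$, $B(1)=1$ this forces $B(m)=d(m+1)$. (Alternatively one may seek a direct bijection: since $(n+1)!-d(n+1)-d(n)$ counts permutations of $[n+1]$ with a fixed point in $[n]$, and $a(n)=\sum_\pi m_n(\pi)$ counts pairs $(\pi,u_i)$ with $u_i$ a matched vertex of $\pi$, it would suffice to biject these, letting the largest fixed point in $[n]$ encode $i$.) Controlling how tight steps are created and destroyed under insertion is, I expect, the main obstacle. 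Granting $B(m)=d(m+1)$, the recursion becomes $a(n)=n\,a(n-1)+n!-d(n)$, and since $(n+1)!-d(n+1)-d(n)$ satisfies this (using $d(k+1)=k(d(k)+d(k-1))$) and agrees at $n=1$, induction gives $a(n)=(n+1)!-d(n+1)-d(n)$. Finally $d(k)/k!=\sum_{i=0}^{k}(-1)^i/i!=\tfrac1e+O(1/(k+1)!)$ yields $\rho_n(\mathit{Ranking},\mathit{MonotoneG})=a(n)/n!=(1-\tfrac1e)n+1-\tfrac2e+O(1/n!)$, which by Proposition~\ref{pro:indifference} also bounds $\rho_n$ from above.
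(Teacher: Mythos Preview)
Your Steps~1 and~2 are correct and set up a genuinely different route from the paper's. The reformulation in Step~1 --- processing $V$ in rank order and matching each $v_{j_r}$ to the least-index available $u$ --- does produce the same matching as \emph{Ranking} on \emph{MonotoneG}, though the one-line ``unique stable matching'' justification is thin; a short induction on the counter $c$ (showing that any $v$ of rank between $r_{c-1}$ and $r_c$ has index $<c$ and hence is not a neighbour of $u_c$) makes it rigorous. The counter recursion $m_r=m_{r-1}+\mathbf 1\{j_r>m_{r-1}\}$ and the insertion analysis in Step~2 are both right, and the identity $a(n)=n\,a(n-1)+n!-B(n-1)$ with $B(m)=\sum_{\sigma\in S_m}T(\sigma)$ holds (and $B(m)=d(m+1)$ checks numerically for small $m$).

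The genuine gap is Step~3: you do not prove $B(m)=d(m+1)$, and both sketched routes are more problematic than you suggest. For the recursion route, the last tight step of $\sigma^{(p)}$ is \emph{not} a simple function of $T(\sigma)$ and $p$: while the surplus $\delta=1$ is alive, steps that were tight in $\sigma$ (those with $j'_{s}=m'_{s-1}+1$) become non-tight, and steps with $j'_{s}=m'_{s-1}+2$ become newly tight; after the surplus is absorbed the tight steps of $\sigma$ shift by one position. Tracking the \emph{last} tight step therefore requires additional statistics (at minimum the last ``excess-by-two'' position), and there is no evident reason the sum over $p$ collapses to $m\bigl(B(m-1)+B(m-2)\bigr)$. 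The bijection route is plausible but you have not built one. As it stands, the proof is incomplete at its crux.

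For comparison, the paper sidesteps $B$ entirely. It decomposes $a(n)=\sum_{i}a(n,i)$ where $a(n,i)$ counts permutations in which the rank-$i$ item is matched, and proves the two-term recursion $a(n,i)=a(n,i+1)+a(n-1,i)$ by swapping positions $i$ and $i+1$ and exhibiting an explicit bijection for the difference. This recursion, with boundary $a(n,1)=n!$, is then matched term-by-term to the elementary recursion for $d(n,i)$ (permutations of $[n]$ whose fixed points all lie in $\{1,\dots,i\}$), giving $a(n,i)=d(n,n+1-i)$ and hence $a(n)=(n+1)!-a(n+1,n+1)=(n+1)!-d(n+1,1)=(n+1)!-d(n+1)-d(n)$. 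All the work is in the swap bijection, which is delicate but self-contained; no analogue of your $B(m)=d(m+1)$ is needed.
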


\begin{proof}
When the input is {\em MonotoneG}, then for every permutation $\pi$ used by {\em Ranking}, the matching $M'$ produced satisfies the following two properties:

\begin{itemize}

\item All vertices in some prefix of $U$ are matched, and then no vertices in the resulting suffix are matched. 
    This is because all neighbors of $u_{j+1}$ are also neighbors of $u_j$, so if $u_{j+1}$ is matched then so is $u_j$.

\item The order in which vertices of $V$ are matched is consistent with the order $\pi$ (for those vertices that are matched -- some vertices of $V$ may remain unmatched). In other words, if two vertices $v_i$ and $v_j$ are matched and $\pi(i) < \pi(j)$, then the vertex $u \in U$ matched with $v_i$ arrived earlier (has smaller index) than the vertex $u' \in U$ matched with $v_j$.

\end{itemize}

Some arguments in the proof that follows make use of the above properties, without explicitly referring to them.

Fix $n$ and {\em MonotoneG} as input. Let $\Pi_n$ denote the set of all permutations over $V$. Hence $|\Pi_n| = n!$. {\em Ranking} picks one permutation $\pi \in \Pi_n$ uniformly at random. Recall our notation that $\pi(i)$ is the rank of $v_i$ under $\pi$.
We shall use $\pi_i$  to denote the item of rank $i$ in $\pi$ (namely, $\pi_i = \pi^{-1}(i)$).
For $i \le n$, let $a(n,i)$ denote the number of permutations $\pi\in \Pi_n$ under which $\pi_i$ is matched.

\begin{proposition}
\label{pro:a(n)}
For $a(n)$ as defined in Theorem~\ref{thm:exact1} and $a(n,i)$ as defined above, it holds that $a(n) = \sum_{i=1}^n a(n,i)$.
\end{proposition}

\begin{proof}
For a permutation $\pi \in \Pi_n$, let $x(\pi)$ denote the size of the greedy matching produced when {\em Ranking} uses $\pi$ and the input graph in {\em MonotoneG}. Then by definition:
$$a(n) = \sum_{\pi \in \Pi_n} x(\pi).$$
By changing the order of summation:
$$\sum_{\pi \in \Pi_n} x(\pi) = \sum_{i=1}^n a(n,i).$$
Combining the above equalities 
proves the proposition.
\end{proof}

Proposition~\ref{pro:a(n)} motivates the study of the function $a(n,i)$.

\begin{lemma}
\label{lem:a(n,i)}
The function $a(n,i)$ satisfies the following:

\begin{enumerate}
\item $a(n,1) = n!$ for every $n \ge 1$.
\item $a(n,i) = a(n,i+1) + a(n-1,i)$ for every $1 \le i < n$.
\end{enumerate}
\end{lemma}

\begin{proof}
The first statement in the lemma holds because in every permutation $\pi$, the item $\pi_1$ is matched with $u_1$. Hence it remains to prove the second statement.

Fixing $n > 1$ and $i < n$, consider the following bijection $B_i : \Pi_n \longrightarrow \Pi_n$, where given a permutation $\pi \in \Pi_n$, $B_i(\pi)$ flips the order between locations $i$ and $i+1$. Namely, $B_i(\pi)_i = \pi_{i+1}$ and $B_i(\pi)_{i+1} = \pi_i$ (we use $B_i(\pi)_i$ as shorthand notation for $(B_i(\pi))_i$). We compare the events that $\pi_i$ is matched by the greedy matching when {\em Ranking} uses $\pi$ with the event that $B_i(\pi)_{i+1}$ is matched by the greedy matching when {\em Ranking} uses $B_i(\pi)$.

There are four possible events:

\begin{enumerate}

\item Both $\pi_i$ and $B_i(\pi)_{i+1}$ are matched.

\item Neither $\pi_i$ nor $B_i(\pi)_{i+1}$ are matched.

\item $\pi_i$ is matched but $B_i(\pi)_{i+1}$ is not matched.

\item $\pi_i$ is not matched but $B_i(\pi)_{i+1}$ is matched.

\end{enumerate}

Though any of the first three events may happen, the fourth event cannot possibly happen. This is because the item in location $i+1$ in $B_i(\pi)$ is moved forward to location $i$ in $\pi$, so if the greedy algorithm matches it (say to $u_j$) in $B_i(\pi)$, then the greedy algorithm must match it (either to the same $u_j$ or to the earlier $u_{j-1}$) in $\pi$.

It follows that $a(n,i) - a(n,i+1)$ exactly equals the number of permutations in which the third event happens. Hence we characterize the conditions under which the third event happens. Let $u_j$ be the vertex matched with $\pi_i$ in $\pi$.
Up to the arrival of $u_j$, the behavior of {\em Ranking} on $B_i(\pi)$ and $\pi$ is identical. Thereafter, for $u_j$ not to be matched to $B_i(\pi)_{i+1} = \pi_i$, it must be matched to the earlier $B_i(\pi)_i$. Thereafter, for $u_{j+1}$ not to be matched to $B_i(\pi)_{i+1}$, it must be that $B_i(\pi)_{i+1}$ is not a neighbor of $u_{j+1}$. But $B_i(\pi)_{i+1} = \pi_i$ is a neighbor of $u_j$ (it was matched to $u_j$ under $\pi$), and hence it must be that $\pi_i = v_j$. Summarizing, the third event happens if and only if the permutation $B_i(\pi)$ comes from the following class $\hat{\Pi}$, where permutations $\hat{\pi}\in \hat{\Pi}$ are those that have the property that $\hat{\pi}_i$ is matched, and $\hat{\pi}_{i+1} = v_j$, for the same $j$ for which $u_j$ is the vertex matched with $\hat{\pi}_i$. Consequently, $a(n,i) = a(n,i+1) + |\hat{\Pi}|$.

To complete the proof of the lemma, it remains to show that $|\hat{\Pi}| = a(n-1,i)$. Let $\Pi' \subset |\Pi_{n-1}|$ be the set of these permutations $\pi' \in \Pi_{n-1}$ under which {\em Ranking} (when $|U| = |V| = n-1$) matches the item $\pi'_{i}$.

\begin{claim}
For $\hat{\Pi}$ and $\Pi'$ as defined above it holds that $|\hat{\Pi}|  = |\Pi'|$.
\end{claim}

\begin{proof}
We first show a mapping from $\hat{\Pi}$ to $\Pi'$. Given $\hat{\pi} \in \hat{\Pi}$, let $v_j = \hat{\pi}_{i+1}$. To obtain permutation  $\pi' \in \Pi_{n-1}$ from $\hat{\pi}$, remove $v_j$ from $\hat{\pi}$, identify location $k$ in $\hat{\pi}$ with location $k-1$ in $\pi'$ (for $i+2 \le k \le n$), and identify item $v_{\ell}$ of $\hat{\pi}$ with item $v_{\ell-1}$ of $\pi'$ (for $j+1 \le \ell \le n$). We show now that $\pi' \in \Pi'$ (namely, $\pi'_i$ is matched, when the input graph is {\em MonotoneG} with $|U| = |V| = n-1$).

The vertices $u_1, \ldots, u_{j-1}$ are matched to exactly the same locations in $\pi'$ and in $\hat{\pi}$, because the only vertices whose indices were decremented had index $\ell \ge j+1$, and are neighbors of $u_1, \ldots, u_{j-1}$ both before and after the decrement. Let $v_k = \hat{\pi}_i$  and note that $k > j$, because $v_k$ is matched to $u_j$ and it is not $v_j = \hat{\pi}_{i+1}$.  Hence $\pi'_i = v_{k-1}$ and it too is a neighbor of $u_j$, because $j \le k-1$. Hence $\pi'_i$ will be matched to $u_j$.

Conversely, we have the following mapping from $\Pi'$ to $\hat{\Pi}$. Given $\pi' \in \Pi'$, let $u_j$ be the vertex matched $\pi'_i$. To obtain permutation  $\hat{\pi} \in \hat{\Pi}$ from $\pi'$, identify location $k$ in $\hat{\pi}$ with location $k-1$ in $\pi'$ (for $i+2 \le k \le n$), identify item $v_{\ell}$ of $\hat{\pi}$ with item $v_{\ell-1}$ of $\pi'$ (for $j+1 \le \ell \le n$), and set $\hat{\pi}_{i+1} = v_{j}$. We show now that $\hat{\pi} \in \hat{\Pi}$.

As in the first mapping, the vertices $u_1, \ldots, u_{j-1}$ are matched to exactly the same locations in $\pi'$ and in $\pi$. Let $v_k = \pi'_i$ and note that $k \ge j$, because $v_k$ was matched to $u_j$.  Hence $\hat{\pi}_i = v_{k+1}$ is neighbor of $u_j$, and will be matched to $u_j$. On the other hand, $\hat{\pi}_{i+1} = v_j$ will not be matched because it is not a neighbor of any of $[u_{j+1}, u_n]$. Hence $\hat{\pi} \in \hat{\Pi}$.

Given the two mappings described above (one is the inverse of the other) we have a bijection between $\Pi'$ and $\hat{\Pi}$, proving the claim.
\end{proof}

The claim above implies that $|\hat{\Pi}| = |\Pi'| = a(n-1,i)$, and consequently that $a(n,i) = a(n,i+1) + a(n-1,i)$, proving the lemma.
\end{proof}

In passing, we note the following corollary.

\begin{corollary}
\label{cor:a(n)}
For $a(n,i)$ and $a(n)$ as defined above, $a(n) = (n+1)! - a(n+1,n+1)$.
\end{corollary}

\begin{proof}
Using item~1 of Lemma~\ref{lem:a(n,i)} we have that $a(n+1,1) = (n+1)!$. Applying item~2 of Lemma~\ref{lem:a(n,i)} iteratively for all $1 \le i \le n$ we have that $a(n+1,1) - a(n+1,n+1) = \sum_{i=1}^n a(n,i)$. Proposition~\ref{pro:a(n)} shows that $\sum_{i=1}^n a(n,i) = a(n)$. Combining these three equalities we obtain $a(n) = (n+1)! - a(n+1,n+1)$, as desired.
\end{proof}

Corollary~\ref{cor:a(n)} can also be proved directly, without reference to Lemma~\ref{lem:a(n,i)}. See Appendix~\ref{sec:directly} for details.

To obtain expressions for the values $a(n,i)$, let us introduce additional notation. A {\em fixpoint} (or {\em fixed point}) in a permutation $\pi$ is an item that does not change its location under $\pi$ (namely, $\pi(i) = i$). For $n \ge 1$ and $1 \le i \le n$ define $d(n,i)$ be the number of permutations over $[n]$ in which the only fixpoints (if any) are among the first $i$ items. For example, $d(3,1) = 3$ due to the permutations 132 (only 1 is a fixed point) 231 (no fixpoints) and 312 (no fixpoints).

\begin{lemma}
\label{lem:d(n,i)}
The function $d(n,i)$ satisfies the following:

\begin{enumerate}
\item $d(n,n) = n!$ for every $n \ge 1$.
\item $d(n,i+1) = d(n,i) + d(n-1,i)$ for every $1 \le i < n$.
\end{enumerate}
\end{lemma}

\begin{proof}
$d(n,n)$ denotes the number of permutations on $[n]$ with no restrictions, and hence $d(n,n) = n!$, which is the first statement of the lemma.

Consider now the second statement of the lemma. Let $\Pi_{n,i}$ denote the set of permutations in which the only fixpoints (if any) are among the first $i$ items. Then the second statement asserts that $|\Pi_{n,i+1}| = |\Pi_{n,i}| + |\Pi_{n-1,i}|$. The set $\Pi_{n,i+1}$ can be partitioned in two. In one part $i+1$ is not a fixpoint. This part is precisely $\Pi_{n,i}$. In the second part, $i+1$ is a fixpoint. To specify a permutation in this part we need to specify the location of the remaining $n-1$ items, where the only fixpoints allowed are among the first $i$ items. The number of permutations satisfying these constraints is  $\Pi_{n-1,i}$, by definition. Hence indeed $|\Pi_{n,i+1}| = |\Pi_{n,i}| + |\Pi_{n-1,i}|$, proving the lemma.
\end{proof}

\begin{corollary}
\label{cor:ad}
For every $n \ge 1$ and $1 \le i \le n$ it holds that $a(n,i) = d(n,n+1-i)$.
\end{corollary}

\begin{proof}
The proof is by induction on $n$, and for every value of $n$, by induction on $i$.

For the base case $n = 1$, necessarily $i=1$ (and hence also $n + 1 - i =1$) and indeed we have $a(1,1) = 1 = d(1,1)$. Fixing $n > 1$, the base case for $i$ is $i = 1$ (and $n+1-1 = n$) and indeed we have that $a(n,1) = n! = d(n,n)$. For the inductive step, consider $a(n,i)$ with $n > 1$ and $1 < i \le n$, and assume the inductive hypothesis for $n' < n$ and the inductive hypothesis for $n$ and $i' < i$. Then we have:

$$a(n,i) = a(n,i-1) - a(n-1,i-1) = d(n, n - i + 2) - d(n-1, n - i + 1) = d(n, n-i+1)$$

The first equality is by Lemma~\ref{lem:a(n,i)}, the second equality is by the inductive hypothesis, and the third equality is by Lemma~\ref{lem:d(n,i)}.
\end{proof}




We can now complete the proof of Theorem~\ref{thm:exact1}. By Corollary~\ref{cor:a(n)} we have that $a(n) = (n+1)! - a(n+1,n+1)$. By Corollary~\ref{cor:ad} we have that $a(n+1,n+1) = d(n+1,1)$. By definition, $d(n+1,1)$ is the number of permutations on $[n+1]$ in which only item~1 is allowed to be a fixpoint. This number is precisely $d(n+1) + d(n)$ (where $d(j)$ are the derangement numbers), where the term $d(n+1)$ counts those permutations in which there is no fixpoint, and the term $d(n)$ counts those permutations in which item~1 is the only fixpoint.
\end{proof}

The second part of Theorem~\ref{thm:exact} is restated in the following Corollary.

\begin{corollary}
\label{cor:exact}
For every $n$,
$$\rho_n(Ranking,MonotoneG) = (1 + \frac{1}{e})n + (1 - \frac{2}{e}) + \nu(n)$$
where $|\nu(n)| < \frac{1}{n!}$.
\end{corollary}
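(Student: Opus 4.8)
The plan is to combine the closed form for $a(n)$ from Theorem~\ref{thm:exact1} with the classical series representation of the derangement numbers, and then carry out a short estimate. (The computation below yields $(1-\frac1e)n + (1-\frac2e) + \nu(n)$, which is the intended bound; the coefficient of $n$ in the displayed statement should read $1-\frac1e$, matching the abstract and Theorem~\ref{thm:exact}.)

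First I would rewrite the quantity of interest using $\rho_n(Ranking,MonotoneG) = a(n)/n!$ together with Theorem~\ref{thm:exact1}:
\[
\rho_n(Ranking,MonotoneG) = \frac{(n+1)! - d(n+1) - d(n)}{n!} = (n+1) - \frac{d(n+1)}{n!} - \frac{d(n)}{n!}.
\]
Next I would invoke the classical identity $d(m) = m!\sum_{k=0}^{m}\frac{(-1)^k}{k!}$. Writing $S_m = \sum_{k=0}^{m}\frac{(-1)^k}{k!}$ and $r_m = \frac1e - S_m = \sum_{k=m+1}^{\infty}\frac{(-1)^k}{k!}$, the alternating-series remainder estimate gives $|r_m| \le \frac{1}{(m+1)!}$. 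Hence $\frac{d(n)}{n!} = S_n = \frac1e - r_n$, and, since $(n+1)! = (n+1)\cdot n!$, $\frac{d(n+1)}{n!} = (n+1)S_{n+1} = \frac{n+1}{e} - (n+1)r_{n+1}$.

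Substituting these into the displayed expression and grouping terms,
\[
\rho_n(Ranking,MonotoneG) = (n+1) - \frac{n+1}{e} - \frac1e + (n+1)r_{n+1} + r_n = \Bigl(1-\frac1e\Bigr)n + \Bigl(1-\frac2e\Bigr) + \nu(n),
\]
where $\nu(n) = (n+1)r_{n+1} + r_n$. It remains to bound $\nu(n)$; using the two remainder estimates,
\[
|\nu(n)| \le (n+1)\cdot\frac{1}{(n+2)!} + \frac{1}{(n+1)!} = \frac{1}{n!}\Bigl(\frac{1}{n+2} + \frac{1}{n+1}\Bigr) < \frac{1}{n!},
\]
the final strict inequality holding because $\frac{1}{n+2}+\frac{1}{n+1} \le \frac56 < 1$ for every $n \ge 1$.

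There is no genuine obstacle here: the argument is a routine estimate once Theorem~\ref{thm:exact1} is available. The only point requiring a little care is getting the strict bound $|\nu(n)| < \frac{1}{n!}$ rather than a merely asymptotic $O(\frac{1}{n!})$ — one must apply the sharper remainder bound $|r_{n+1}| \le \frac{1}{(n+2)!}$ to the term multiplied by $n+1$, since the cruder bound $|r_{n+1}| \le \frac{1}{(n+1)!}$ would only yield $|\nu(n)| \le \frac{2}{n!}$.
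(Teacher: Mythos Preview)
Your proof is correct and takes essentially the same approach as the paper: both start from Theorem~\ref{thm:exact1}, replace each $d(m)$ by $m!/e$ plus a small error, and divide through by $n!$. The only cosmetic difference is that the paper invokes the well-known fact that $d(m)$ is $m!/e$ rounded to the nearest integer (so $|d(m)-m!/e|<\tfrac12$, giving total error $<1$ before dividing by $n!$), whereas you derive the same bound directly from the alternating-series remainder for $\sum (-1)^k/k!$; your observation about the typo $1+\tfrac1e$ versus $1-\tfrac1e$ is also correct.
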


\begin{proof}
Theorem~\ref{thm:exact1} shows that $a(n) = (n+1)! - d(n+1) - d(n)$, where $d(n)$ are the derangement numbers. It is known that $d(n) = \frac{n!}{e}$ rounded to the nearest integer. Hence $|d(n) - \frac{n!}{e}| < \frac{1}{2}$ and $|d(n+1) + d(n) - \frac{(n+1)!}{e} - \frac{n!}{e}| < 1$. Hence $|a(n) - (1 - \frac{1}{e})(n+1)! - \frac{n!}{e}| < 1$. Dividing by $n!$ and replacing $(1 - \frac{1}{e})(n+1)$ by $(1 - \frac{1}{e})n + 1 - \frac{1}{e}$ the corollary is proved.
\end{proof}

\subsection{Some related sequences}

To illustrate the values of some of the parameters involved in the proof of Theorem~\ref{thm:exact1}, consider a triangular table $T$ where row $n$ has $n$ columns. The entries (for $1 \le i \le n$) are $d(n,i)$, as defined prior to Lemma~\ref{lem:d(n,i)}. Recall that $d(n,i) = a(n, n+1-i)$, hence the table also provides the $a(n,i)$ values. We initialize the diagonal of the table by $d(n,n) = n!$. Thereafter we fill the remaining cells of table row by row, by using the relation $d(n,i) = d(n,i+1) - d(n-1,i)$, implied by Lemma~\ref{lem:d(n,i)}. Finally, compute $a(n) = \sum_{i=1}^n a(n,i) = \sum_{i=1}^n d(n,i)$ by summing up each row. The table below shows the computation of $a(n)$ for $n \le 6$.
\medskip

\begin{tabular}{|c|c|c|c|c|c|c|c|}
  \hline
  n & d(n,1)=a(n,n) & d(n,2) & d(n,3) & d(n,4) & d(n,5) & d(n,6) & a(n) \\
  \hline
  1 & 1 &  &  &  &  &  & 1 \\
  2 & 1 & 2 &  &  &  &  & 3 \\
  3 & 3 & 4 & 6 &  &  &  & 13 \\
  4 & 11 & 14 & 18 & 24 &  &  & 67 \\
  5 & 53 & 64 & 78 & 96 & 120 &  & 411 \\
  6 & 309 & 362 & 426 & 504 & 600 & 720 & 2921 \\
  \hline
\end{tabular}
\medskip

The table $T$ is identical in its definition to Sequence A116853, named {\em Difference triangle of factorial numbers read by upward diagonals}, in {\em The Online Encyclopedia of Integer Sequences}~\cite{Sloane}. The row sums (and hence $a(n)$) in this table give Sequence A180191 (with an offset of~1 in the value of $n$), named {\em Number of permutations of $[n]$ having at least one succession}. The first column (which equals $a(n,n)$) is the sequence A000255. These relations between $a(n)$ and the various sequences in~\cite{Sloane} helped guide the statement and proof of Theorem~\ref{thm:exact1}.

The derangement numbers $d(n)$ (which form the sequence A000166) can be easily computed by the recurrence $d(n) = n\cdot d(n-1) + (-1)^n$ (due to Euler). The table below shows the computation of $a(n) = (n+1)! - d(n+1) - d(n)$ for $n \le 7$.
\medskip

\begin{tabular}{|c|c|c|c|}
  \hline
  n & n! & d(n) & a(n) \\
  \hline
  1 & 1 & 0 & 1 \\
  2 & 2 & 1 & 3 \\
  3 & 6 & 2 & 13 \\
  4 & 24 & 9 & 67 \\
  5 & 120 & 44 & 411 \\
  6 & 720 & 265 & 2921 \\
  7 & 5040 & 1854 & 23633 \\
  8 & 40320 & 14833 &  \\
  \hline
\end{tabular}
\medskip

\subsection*{Acknowledgements}

The work of the author is supported in part by the Israel Science Foundation (grant No. 1388/16). The results reported in this manuscript were obtained in preparation for a talk given at the event {\em Building Bridges II: Conference to celebrate the 70th birthday of Laszlo Lovasz, Budapest, July 2018}.
I thank several people whose input helped shape this work.
Alon Eden and Michal Feldman directed me to the proof presented in~\cite{EFFS18}, which is the one presented here (in the appendix) for Theorem~\ref{thm:KVV90a}.
Thomas Kesselheim and Aranyak Mehta directed me to additional relevant references.
The statement and proof of Theorem~\ref{thm:exact1} were based on noting some numerical coincidences between the values of $a(n)$ for small $n$ and sequences in {\em The Online Encyclopedia of Integer Sequences}~\cite{Sloane}. Dror Feige wrote a computer program that computes $a(n)$, which made these numerical coincidences evident.  Alois Heinz offered useful advice as to how to figure out proofs for various identities claimed in~\cite{Sloane}.

\begin{appendix}

\section{A performance guarantee for {\em Ranking}}
\label{sec:KVVa}

For completeness, we review here a proof of Theorem~\ref{thm:KVV90a}.
The proof that we present uses essentially the same mathematical expressions as the proof presented in~\cite{DJK13}. A simple presentation of the proof of~\cite{DJK13} appeared in a blog post of Claire Mathieu~\cite{Mathieu11} (with further slight simplifications made possible by a comment provided there by Pushkar Tripathi). We shall give an arguably even simpler presentation, due to Eden, Feldman, Fiat and Segal~\cite{EFFS18}. The proofs in~\cite{DJK13,Mathieu11} make use of linear programming duality. The proof below is based on an economic interpretation, and a proof technique
that splits {\em welfare} into the sum of {\em utility} and {\em revenue}. These last two terms turn out to be scaled versions of the dual variables used in~\cite{DJK13,Mathieu11}, but the proof does not need to make use of LP duality.

\begin{proof}{\bf [Theorem~\ref{thm:KVV90a}]}
Fix an arbitrary perfect matching $M$ in $G$. Given a vertex $v \in V$, we use $M(v)$ to denote the vertex in $U$ matched with $v$ under $M$.

Recall that {\em Ranking} chooses a random permutation $\pi$ over $V$. Equivalently, we may assume that every vertex $v_i \in V$  chooses independently uniformly at random a real valued weight $w_i \in [0,1]$, and then the vertices of $V$ are sorted in order of increasing weight (lowest weight first). This gives a random permutation $\pi$. The same permutation $\pi$ is also obtained if each weight $w_i$ is replaced by a ``price" $p_i = e^{w_i - 1}$ and vertices are sorted by prices (because $e^{x-1}$ is a monotonically increasing function in $x$). Observe that $p_i \in [\frac{1}{e},1]$, though it is not uniformly distributed in that range. The expected price that {\em Ranking} assigns to an item is:
\begin{equation}
\label{eq:price}
E[p_i] =  \int_0^1 e^{w_i - 1} dw_i = \frac{1}{e}(e - 1) = 1 - \frac{1}{e}
\end{equation}

It is convenient to think of the vertices of $U$ as {\em buyers} and the vertices of $V$ as {\em items}. Suppose that given $G(U,V;E)$, each vertex (buyer) $u \in U$ desires only items $v \in V$ that are neighbors of $u$ (namely, $u$ desires $v$ iff $(u,v) \in E$), is willing to pay~1 for any such item, and wishes to buy exactly one item. The seller holding the items is offering to sell each item $v_i$ for a price of $p_i$. Then given $G$, the matching produced by executing the {\em Ranking} algorithm is the same as the one that would be produced in a setting in which each buyer $u_j$, upon arrival, buys its cheapest exposed desired item, if there is any. If $p_i$ is the price of the purchased item $v_i$, then the {\em revenue} that the seller extracts from the sale of $v_i$ to $u_j$ is $r(v_i) = p_i$, whereas the {\em utility} that the buyer extracts is $y(u_i) = 1 - p_i$. Consequently, the revenue plus utility extracted from a sale is~1, and the total revenue extracted from all sales plus the total utility sum up to exactly the cardinality of the matching.

To lower bound the expected cardinality of the matching, we consider each edge $(M(v_i),v_i) \in M$ separately, and consider the expectation $E[r(v_i) + y(M(v_i))]$, where expectation is taken over the choice of $\pi$. Using the linearity of the expectation, we will have that $\rho_n(Ranking,G) = \sum_{v_i \in V} E[r(v_i) + y(M(v_i))]$.

\begin{lemma}
\label{lem:claims}
For every $v_i \in V$ it holds that $E[r(v_i) + y(M(v_i))] \ge 1 - \frac{1}{e}$. Moreover, this holds even if expectation is taken only over the choice of random weight $w_i$ (and hence of random price $p_i$) of item $v_i$, without need to consider other aspects of the random permutation $\pi$.
\end{lemma}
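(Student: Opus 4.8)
The plan is to prove this per-edge inequality one coordinate at a time, revealing the randomness of {\em Ranking} in a convenient order. Fix the index $i$ and condition on the weights $\{w_j : j \neq i\}$ of all the other items; it suffices to show $E_{w_i}[\,r(v_i)+y(M(v_i))\,] \ge 1-\frac{1}{e}$ for every such conditioning, and this automatically delivers the ``moreover'' clause. Having fixed $\{w_j : j\neq i\}$, run {\em Ranking} on the instance obtained from $G$ by deleting the item $v_i$, and let $\beta_i\in[0,1]$ be the weight of the item bought by $u := M(v_i)$ in this auxiliary (``deleted'') run, with the convention $\beta_i := 1$ if $u$ is left unmatched there; put $\theta_i := e^{\beta_i-1}\in[\frac{1}{e},1]$. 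The crucial point is that $\theta_i$ is determined by $\{w_j : j\neq i\}$ alone, independently of $w_i$.

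Next I would establish two structural facts about the true run of {\em Ranking} on $G$ (with $v_i$ present at price $p_i=e^{w_i-1}$), valid for every value of $w_i$: (i) if $p_i\le\theta_i$ then $v_i$ is matched, and hence $r(v_i)=p_i$; and (ii) $y(M(v_i))\ge 1-\theta_i$. Both follow from the standard monotonicity of the greedy-by-increasing-price rule: inserting one extra item into the instance can only improve the buyers, and the change is captured by a {\em replacement chain} --- $v_i$ is bought by some buyer $b_1$, which frees the item $z_1$ that $b_1$ held in the deleted run; $z_1$ is later bought by some buyer $b_2$, which frees the item $z_2$ that $b_2$ held in the deleted run; and so on --- in which the buyers $b_1,b_2,\dots$ arrive in strictly increasing order, each $b_k$ ends up with a strictly cheaper item than in the deleted run, the freed items have strictly increasing prices $p_i<p_{z_1}<p_{z_2}<\cdots$, and every buyer outside the chain is completely unaffected. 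Fact (ii) is then immediate: either $\theta_i<1$, in which case $u$ either keeps its deleted-run item of price $\theta_i$ or lies on the chain and does strictly better, so $y(M(v_i))\ge 1-\theta_i$; or $\theta_i=1$ and $1-\theta_i=0\le y(M(v_i))$ trivially. For fact (i): if $v_i$ were unmatched the chain would be empty, so the true run and the deleted run would be identical up to $u$'s arrival, and at that moment $v_i$ is available and desired by $u$ at price $p_i\le\theta_i$; but in the deleted run $u$ bought an item of price $\theta_i$, so $u$ would actually take $v_i$ (price ties among items have probability $0$ and are ignored), contradicting that the chain is empty. The same observation covers the borderline case $\theta_i=1$, where $u$ is unmatched in the deleted run and fact (i) must hold for all $w_i$: an empty chain would leave the desired item $v_i$ available when the otherwise-unmatched $u$ arrives, forcing $u$ to buy it.

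The remaining computation is routine: using (ii) for all $w_i$, (i) for $w_i\le\beta_i$, and $e^{\beta_i-1}=\theta_i$,
$$E_{w_i}[\,r(v_i)+y(M(v_i))\,]\ \ge\ (1-\theta_i)+\int_0^{\beta_i}e^{w-1}\,dw\ =\ (1-\theta_i)+\big(e^{\beta_i-1}-e^{-1}\big)\ =\ 1-\frac{1}{e},$$
which holds for every fixing of $\{w_j : j\neq i\}$, hence unconditionally, with the bound using only the randomness of $w_i$. I expect the one genuinely delicate step to be the careful justification of the replacement-chain / monotonicity claims behind facts (i) and (ii) --- verifying that every buyer in the augmented instance ends up matched at a price no larger than in the deleted instance, and disposing of the boundary case $\theta_i=1$ and the probability-zero price ties. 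With that structural lemma in hand, the rest is the one-line integral above, and summing over $i$ by linearity of expectation yields Theorem~\ref{thm:KVV90a}.
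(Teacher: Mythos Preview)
Your proof is correct and follows essentially the same approach as the paper's: your threshold $\theta_i$ (and its preimage $\beta_i$) is exactly the paper's $p$ (and its preimage $z$), your facts (i) and (ii) are the paper's two claims, and the final integral computation is identical. The only cosmetic difference is that you spell out the ``replacement chain'' mechanism explicitly, whereas the paper summarizes it as the statement that at every step the exposed set under $M'$ equals the exposed set under $M'_{-i}$ plus one additional vertex.
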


\begin{proof}
Fix an arbitrary graph $G(U,V;E)\in G_n$, 
an arbitrary perfect matching $M$, and arbitrary prices $p_j \in [\frac{1}{e},1]$ for all items $v_j \not= v_i$. The price $p_i$ for item $v_i$ is set at random. Let $M'$ denote the greedy matching produced by this realization of the {\em Ranking} algorithm (where each buyer upon its arrival is matched to the exposed vertex of lowest price among its neighbors, if there is any). Suppose as a thought experiment that item $v_i$ is removed from $V$, and consider the greedy matching $M'_{-i}$ that would have been produced in this setting. Let $p$ denote the price of the item in $V$ matched to $M(v_i)$ under $M'_{-i}$, and set $p=1$ if $M(v_i)$ is left unmatched under $M'_{-i}$. Now we make two easy claims.

\begin{enumerate}

\item {\em If $p_i < p$, then $v_i$ is matched in $M'$.} This follows because at the time that $M(v_i)$ arrived, either $v_i$ was already matched (as desired), or it was available for matching with $M(v_i)$ and preferable (in terms of price) over all other items that $M(v_i)$ desires (as all have price at least $p > p_i$).

\item {\em The utility of $M(v_i)$ in $M'$ satisfies $y(M(v_i)) \ge 1-p$.} This follows because under $M'_{-i}$ the utility of $M(v_i)$ is $1 - p$, and under the greedy algorithm considered, the introduction of an additional item (the item $v_i$ when considering $M'$) cannot decrease the utility of any agent. (At every step of the arrival process, the set of exposed vertices under $M'$ contains the set of exposed vertices under $M'_{-i}$, and one more vertex.)

\end{enumerate}

Using the above two claims and taking $z$ to be the value satisfying $p = e^{z-1}$, we have:

$$E[y(M(v_i)) + r(v_i)] \ge 1 - p + Pr[p_i < p]p_i= 1 - e^{z-1} + \int_{w_i=0}^{z} e^{w_i-1} dw_i = 1 - \frac{e^z}{e} + \frac{e^z - 1}{e} = 1 - \frac{1}{e}$$

This completes the proof of Lemma~\ref{lem:claims}.
\end{proof}

Using the linearity of the expectation, we have that

$$\rho_n(Ranking,G) = \sum_{v_i \in V} E[r(v_i) + y(M(v_i))] \ge (1 - \frac{1}{e})n$$

This completes the proof of Theorem~\ref{thm:KVV90a}.
\end{proof}

One can adapt the proof presented above to the special case in which the input graph is {\em MonotoneG} (or more generally, comes from the distribution $D_n$). In this case one can upper bound the slackness involved in the proof of Theorem~\ref{thm:KVV90a}, and infer the following theorem.



\begin{theorem}
\label{thm:RankingUpperbound}
For every $n$ it holds that $\rho_n(Ranking,MonotoneG) \le (1 - \frac{1}{e})n + \frac{1}{e}$.
\end{theorem}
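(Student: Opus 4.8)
The plan is to re-run the economic analysis of \emph{Ranking} from the proof of Theorem~\ref{thm:KVV90a} on the input \emph{MonotoneG} and to bound, from above, the slack that it leaves. In that proof each item $v_i$ receives a random price $p_i=e^{w_i-1}\in[\tfrac1e,1]$ (with $w_i\in[0,1]$ uniform), the size of the produced matching equals $\sum_{i}\big(r(v_i)+y(u_i)\big)$ where $M=\{(u_i,v_i)\}$ is the perfect matching of \emph{MonotoneG}, and Lemma~\ref{lem:claims} derives $E[r(v_i)+y(u_i)\mid p_{-i}]\ge 1-\tfrac1e$ from the two pointwise inequalities $r(v_i)\ge p_i\mathbb{1}[p_i<p^{(i)}]$ and $y(u_i)\ge 1-p^{(i)}$, where $p^{(i)}$ is the price at which $u_i$ is matched in the deletion run $M'_{-i}$ (and $p^{(i)}=1$ if $u_i$ stays exposed there). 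Hence the total ``excess'' $\rho_n(Ranking,MonotoneG)-(1-\tfrac1e)n$ equals $\sum_i E\big[(r(v_i)-p_i\mathbb{1}[p_i<p^{(i)}])+(y(u_i)-(1-p^{(i)}))\big]$, a sum of nonnegative terms, and it suffices to show it is at most $\tfrac1e$ for every $n$ — equivalently, that \emph{Ranking} leaves, in expectation, at least $\tfrac{n-1}{e}$ vertices of $U$ exposed.

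I would bound this excess edge by edge, exploiting the two structural facts about \emph{Ranking} on \emph{MonotoneG} recorded at the start of the proof of Theorem~\ref{thm:exact1} — the matched vertices of $U$ form a prefix $u_1,\dots,u_S$, and the matched vertices of $V$ are matched in increasing order of $\pi$ (equivalently, of price) — together with $N(v_i)=\{u_1,\dots,u_i\}$. These restrict the excess sharply. Since the set of exposed items under $M'$ dominates that under $M'_{-i}$ at every step (the monotonicity observation behind the utility bound in Lemma~\ref{lem:claims}), $u_i$ never pays more than $p^{(i)}$ in $M'$, so $p_i>p^{(i)}$ forces $u_i$ not to take $v_i$; the first excess term is then nonzero only when $v_i$ is sold to an \emph{earlier} buyer $u_k$ ($k<i$), which forces $v_i$ to have been $u_k$'s cheapest exposed neighbour at that moment, and the extra-utility term $y(u_i)-(1-p^{(i)})$ is nonzero only when inserting $v_i$ frees a cheaper item that eventually reaches $u_i$ — in both cases a ``diversion'' of an earlier buyer onto $v_i$. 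The accounting I would attempt is to reorganize the sum of the two excess families over such diversions, charging the price a diversion extracts against the utility it releases so that these offset in pairs, so that what survives reduces to quantities whose expectation is evaluated with $E[p_i]=1-\tfrac1e$ (Equation~\ref{eq:price}) and bounded by $\tfrac1e$; the tight case $n=1$ and the near-tight case $n=2$ should be checked by hand to calibrate the constant.

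The main obstacle will be controlling the coupling between the true run and the deletion run $M'_{-i}$: removing $v_i$ can trigger a cascade of reassignments among $u_1,\dots,u_{i-1}$, so $p^{(i)}$ is only loosely tied to the prices $u_i$ actually encounters, and one must argue that these cascades, aggregated over all $i$, still contribute a total of at most $\tfrac1e$ and not, say, $O(1)$ or $O(\log n)$. If the charging argument proves too delicate, a purely combinatorial fallback is available: on \emph{MonotoneG}, sorting the items by price yields a permutation $\sigma$, the matching size is $S=m_n$ with $m_0=0$ and $m_\ell=m_{\ell-1}+\mathbb{1}[\sigma(\ell)>m_{\ell-1}]$, and one bounds $E[m_n]$ by a recursion; there the obstacle is that $m_{\ell-1}$ alone does not determine $\Pr[\sigma(\ell)>m_{\ell-1}]$ (one must also track how many already-matched items carry a small index), so the recursion has to be carried on a suitably enriched state or potential that closes with precisely the constant $\tfrac1e$ (rather than the sharper $1-\tfrac2e$ of Theorem~\ref{thm:exact}).
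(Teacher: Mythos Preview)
Your setup is correct and matches the paper's: the excess over $(1-\tfrac1e)n$ is exactly $\sum_i E[s(u_i)]$ with $s(u_i)=y(u_i)-y_{-v_i}(u_i)$, and one must show this sum is at most $\tfrac1e$. You also correctly anticipate the obstacle: each term $s(u_i)$ compares the true run to a \emph{different} deletion run $M'_{-i}$, and the cascades these deletions trigger are hard to aggregate. However, the charging scheme you sketch (pairing the price a diversion extracts against the utility it releases) is not worked out, and there is no indication it closes at exactly $\tfrac1e$; as stated it is a hope rather than an argument.

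The paper sidesteps the obstacle with one observation you are missing: on \emph{MonotoneG}, for every $j$ one has $E[y_{-v_j}(u_j)]=E[y_{-v_n}(u_j)]$, because $v_j$ and $v_n$ are both neighbours of all of $u_1,\dots,u_j$ and hence are exchangeable in the randomness up to step $j$. This symmetry lets you replace every deletion $M'_{-i}$ by the \emph{single} deletion $M'_{-n}$, so that $\sum_i E[s(u_i)]=E\big[\sum_i\big(y(u_i)-y_{-v_n}(u_i)\big)\big]$. Now there are only two runs to compare, differing by the removal of $v_n$; a one-alternating-path argument shows the pointwise sum is at most $1-p_n$, whose expectation is $\tfrac1e$. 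Two smaller points: for \emph{MonotoneG} the first excess term $r(v_i)-p_i\mathbb{1}[p_i<p^{(i)}]$ is in fact always zero (if $p_i>p^{(i)}$ then $v_i$ is not sold at all, since after $u_i$ no later buyer wants $v_i$), so your discussion of ``$v_i$ sold to an earlier buyer'' is unnecessary; and your combinatorial fallback would, if carried out, naturally head toward the exact value $1-\tfrac2e$ of Theorem~\ref{thm:exact} rather than the looser $\tfrac1e$ here.
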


\begin{proof}
Recall the two properties mentioned in the beginning of the proof of Theorem~\ref{thm:exact1}. Recall also that the analysis of {\em Ranking} in the proof of Theorem~\ref{thm:KVV90a} (within Lemma~\ref{lem:claims}) involved the matching $M'$ and other matchings $M'_{-i}$, and two claims. Let us analyse the slackness involved in these claims when the input is the monotone graph. The claims are restated with $M(v_i)$ replaced by $u_i$, because for the monotone graph $M(v_i) = u_i$.

The first claim stated that {\em if $p_i < p$, then $v_i$ is matched in $M'$.} When the input is the monotone graph, then a converse also holds: if $p_i > p$, then $v_i$ is not matched in $M'$. This follows because up to the time that $u_i$ arrives and is matched, only vertices of $V$ priced at most $p$ are matched, and thereafter, no other vertex in $U$ desires $v_i$. The event that $p_i = p$ has probability~0. Hence there is no slackness involved in the first claim -- it is an {\em if and only if} statement.

The second claim stated that {\em the utility of $u_i$ in $M'$ is $y(u_i) \ge 1-p$}. This inequality is not tight. Rather, the utility of $u_i$ in $M'_{-i}$ is $1-p$, and $y(u_i)$ is not smaller. Let us quantify the slackness involved in this inequality by introducing slackness variables $s(u)$.
For a vertex $u \in U$ we shall use the notation $y(u)$ to denote the utility of $u$ under {\em Ranking}, and $y_{-v}(u)$ for the utility of $u$ when vertex $v \in V$ is removed. The {\em slackness} $s(u_i)$ of vertex $u_i$ is defined as $s(u_i) = y(u_i) - y_{-v_i}(u_i)$.

\begin{lemma}
\label{lem:last}
For the monotone graph and an arbitrary vertex $u_j \in U$, the expected utility of $u_j$ (expectation taken over choices of $w_i$ for all $1 \le i \le n$ by the {\em Ranking} algorithm) is identical in the following two settings: when $v_j$ is removed, and when $v_n$ is removed. Namely, $E[y_{-v_j}(u_j)] = E[y_{-v_n}(u_j)]$.
\end{lemma}

\begin{proof}
Both $v_j$ and $v_n$ are neighbors of all vertices $u_k$ arriving up to $u_j$ (for $1 \le k \le j$). Hence whichever of the two vertices, $v_j$ or $v_n$, is removed, the distributions of the outcomes of {\em Ranking} on the first $j$ arriving vertices (including $u_j$) are the same.
\end{proof}

As a consequence of Lemma~\ref{lem:last} we deduce that for the monotone graph, the expected slackness of every vertex $u \in U$ satisfies $E[s(u)] = E[y(u)] - E[y_{-v_n}(u)]$.

\begin{lemma}
\label{lem:slackness}
For the monotone graph and arbitrary setting of prices for the items (as chosen at random by {\em Ranking}), $\sum_{u\in U} s(u) \le 1 - p_{n}$. Consequently, $\sum_{u\in U} E[s(u)] \le \frac{1}{e}$, where expectation is taken over choice of weights $w_i$ for vertices in $V$.
\end{lemma}

\begin{proof}
Fix the prices $p_i$ (hence $\pi$). Let $u_1, \ldots, u_k$ be the vertices of $U$ matched under {\em Ranking}, and let $m(u_1), \ldots, m(u_k)$ be the vertices in $V$ to  which they are matched. Observe that the prices $p(m(u_i))$ (where $1 \le i \le k$) of these vertices form a monotonically increasing sequence. Necessarily, $v_{n}$ is one of the matched vertices, because it is a neighbor of all vertices in $U$. Let $j$ be such that $v_{n} = m(u_j)$.

Consider now what happens when $v_{n}$ is removed. The vertices $u_1, \ldots, u_{j-1}$ are matched to $m(u_1), \ldots, m(u_j-1)$ as before. As to the vertices $u_j, \ldots, u_{k-1}$, they can be matched to  $m(u_{j+1}), \ldots, m(u_k)$, hence the algorithm will match them to vertices of no higher price. Specifically, for every $i$ in the range $j \le i \le k-1$, vertex $u_i$ will be matched either to $m(u_{i+1})$ or to an earlier vertex, though not earlier than $m(u_i)$. The vertex $u_k$ may either be matched or be left unmatched. For simplicity of notation, we say that $u_k$ is matched to either $m(u_{k+1})$ or to an earlier vertex, where $m(u_{k+1})$ is an auxiliary vertex of price~1 than indicates that $u_{k}$ is left unmatched.

Note that:

$$\sum_{u \in U} y(u) = \sum_{i=1}^k y(u_i) = k - \sum_{i=1}^k p(m(u_i))$$

\noindent and that:

$$\sum_{u \in U} y_{-v_{n}}(u) = \sum_{i=1}^k y_{-v_{n}}(u_i) \ge k - \sum_{i=1}^{j-1} p(m(u_i)) - \sum_{i=j+1}^{k+1} p(m(u_i))$$

Hence we have that:

$$\sum_{u\in U} s(u) = \sum_{u \in U} y(u)  - \sum_{u \in U} y_{-v_{n}}(u) \le p(m(u_{k+1})) - p(v_{n})$$

Finally, noting that $p(m(u_{k+1})) \le 1$ and that $E[p(v_{n})] = 1 - \frac{1}{e}$ (see Equation~(\ref{eq:price})), the lemma is proved.
\end{proof}

As in the proof of Theorem~\ref{thm:KVV90a} we have:

$$E[y(u_i) + r(v_i)] = 1 - p + s(u_i) + Pr[p_i < p]p_i  = 1 - \frac{1}{e} + s(u_i)$$

Using the linearity of the expectation and Lemma~\ref{lem:slackness} we have that:

$$\rho_n(Ranking,MonotoneG) = \sum_{v_i \in V} E[r(v_i) + y(u_i)] = (1 - \frac{1}{e})n + \sum_{u \in U} E[s(u)] \le (1 - \frac{1}{e})n + \frac{1}{e}$$

This completes the proof of Theorem~\ref{thm:RankingUpperbound}.
\end{proof}

\section{An alternative proof of a combinatorial identity}
\label{sec:directly}

We present a proof of Corollary~\ref{cor:a(n)} that does not make use of Lemma~\ref{lem:a(n,i)}.

\begin{proof}[Corollary~\ref{cor:a(n)}]
Let $\Pi_{\nsim (n+1)}$ denote those permutations $\pi' \in \Pi_{n+1}$ such that if {\em Ranking} uses $\pi'$ when the input is {\em MonotoneG} (with $|U|=|V| = n+1$), then $\pi'_{n+1}$ (the last item in $\pi'$) is not matched. By definition of $a(n,i)$ the expression $(n+1)! - a(n+1,n+1)$ can be interpreted as $|\Pi_{\nsim (n+1)}|$. We describe a bijection $B$ between $(\Pi_n, [n+1])$ and $\Pi_{n+1}$. The bijection will have the property that given a pair $(\pi \in \Pi_n , i \in [n+1])$ the resulting permutation $B(\pi,i) \in \Pi_{n+1}$ belongs to $\Pi_{\nsim (n+1)}$ if and only if $u_i$ is matched, thus proving the Corollary.

We now describe the bijection for a given $\pi \in \Pi_n$ and $i \in [n+1]$:

\begin{itemize}

\item $B(\pi,n+1)$: place $v_{n+1}$ at location $n+1$. This gives one permutation that we call $\pi_{\rightarrow (n+1)}$.

\item $B(\pi,i)$ for $1 \le i \le n$: place $v_{n+1}$ at location $i$ and place $v_i$ at location $n+1$. This gives $n$ additional permutations, named $\pi_{\leftrightarrow i}$ (the notation $\leftrightarrow$ indicates that $v_{n+1}$ is swapped with $v_i$).

\end{itemize}

There are three cases to consider:

\begin{itemize}

\item $i = n+1$.
In $\pi_{\rightarrow (n+1)}$ the item $v_{n+1}$ at location $n+1$ is matched, because it is a neighbor of all vertices in $U$.

\item $u_i \in U$ is matched in $\pi$. Then all vertices up to $u_i$ are also matched in $\pi_{\leftrightarrow i}$, and to items at locations no later than $n$. This is because the only differences between $\pi$ and $\pi_{\leftrightarrow i}$ involve vertices $v_i$ and $v_{n+1}$, and both of them are neighbors of all arriving vertices up to and including $u_i$. None of the vertices $u_{i+1}, \ldots, u_{n+1}$ is a neighbor of $v_i$, hence in $\pi_{\leftrightarrow i}$ the item $v_i \in V$ at location $n+1$ is not matched.

\item $u_i \in U$ is not matched in $\pi$. In this case $u_i$ will not be matched to any of the first $n$ items of $\pi_{\leftrightarrow i}$ (again, because the only differences between $\pi$ and $\pi_{\leftrightarrow i}$ involve vertices $v_i$ and $v_{n+1}$, and both of them are neighbors of all arriving vertices up to and including $u_i$). Consequently, $u_i$ will be matched to $v_i$ that is at location $n+1$ in $\pi_{\leftrightarrow i}$.

\end{itemize}

\end{proof}

\end{appendix}

\end{document}